\let\@fnsymbol\@alph
\newtheorem{thm}{Theorem}
\newtheorem{lemma}{Lemma}
\newtheorem{prop}{Proposition}
\theoremstyle{definition}
\newtheorem{defn}{Definition}
\newtheorem{examp}{Example}
\newtheorem{rmk}{Remark}
\providecommand{\keywords}[1]{{{Keywords:}} #1}
\def\H{\mathscr H}
\def\Z{\mathbb{Z}}
\begin{document}

\title{Topological insulators and K-theory}

\author[1]{Ralph M. Kaufmann  \thanks{e-mail: rkaufman@math.purdue.edu} }
\author[1]{Dan Li  \thanks{e-mail: li1863@math.purdue.edu } }
\author[2]{Birgit  Wehefritz--Kaufmann   \thanks{e-mail: ebkaufma@math.purdue.edu} }

\affil[1]{Department of Mathematics, Purdue University, 150 N University St\\
         West Lafayette, IN 47907, USA }
\affil[2]{Department of Mathematics and Department of Physics and Astronomy, Purdue University\\
		150 N University St,          West Lafayette, IN 47907, USA }

\date{}
\maketitle

\begin{abstract}
We analyze the topological $\mathbb{Z}_2$ invariant,
which characterizes time reversal invariant topological insulators, in the framework of  index theory and K-theory.
The topological $\mathbb{Z}_2$ invariant counts the parity of generalized Majorana zero modes, which can be interpreted as an analytical index.
As we show, it fits perfectly into a mod 2 index theorem, and the topological index provides an efficient way to compute the topological $\mathbb{Z}_2$ invariant.
Finally, we give a new version of the bulk-boundary correspondence which yields
an alternative explanation of the index theorem and the topological $\mathbb{Z}_2$ invariant.
Here the boundary is not the geometric boundary of a probe, but an effective boundary in the momentum space.

\end{abstract}

\keywords { topological $\mathbb{Z}_2$ invariant, Quaternionic K-theory, 
mod 2 index theorem,   bulk-boundary correspondence }

\section{Introduction}\label{Intro}

 As new materials observed in nature,  topological insulators behave like
insulators in the bulk but have conducting edge  states on the boundary.
A time reversal invariant topological insulator, or simply topological insulator \cite{HK10}, of free (or weak-interacting) fermions, is a quantum
system that has both time reversal $\mathbb{Z}_2$ symmetry  and charge
conservation  (a $U(1)$ symmetry).  Accordingly, in the framework of symmetry protected topological (SPT) phases \cite{CGLW13},
 topological insulators are also referred to as $U (1)$ and time reversal symmetry protected topological order.
According to the Altland-Zirnbauer-Cartan classification of general topological insulators with different $\mathbb{Z}_2$ symmetries \cite{AZ97, SRFL09},
topological insulators with time reversal symmetry are in the symplectic class,
which is also referred to as  type AII topological insulators.
Topological insulators are gapped quantum systems, in which the Fermi energy level is assumed to be in the band gap, 
and further the Fermi level is defined as the zero energy level.

Topological insulators
can be characterized by a $\mathbb{Z}_2$-valued topological  invariant due to the time reversal symmetry. 
The main task of this work is to understand the topological $\mathbb{Z}_2$ invariant in the framework of index theory and K-theory. 
We introduce the notion of generalized   Majorana zero modes and show that one interpretation  of the $\mathbb{Z}_2$ invariant is given by
the parity of these generalized Majorana zero modes. The generalized Majorana zero modes are similar to those found in  Bogoliubov--de
Gennes (BdG) topological superconductors with particle-hole symmetry.  
We will drop the word ``generalized'' in generalized Majorana zero modes from now on.
Our definition of Majorana zero modes generalizes that of Majorana fermions in physics \cite{W09},
but they are related by generalizing the self-adjointness condition (or the real condition).
Physically, Majorana zero modes \cite{DFN15} are quasi-particle excitations bound to a defect at zero energy. 
Two fermionic  Majorana zero modes tend to  couple together and behave as an effective composite boson.
For example, Dirac cones are Majorana zero modes in three-dimensional (3d) time reversal invariant fermionic systems,
and the presence of an unpaired Dirac cone is the characteristic  of a 3d non-trivial topological insulator.
We have to point out that Majorana zero modes should not be confused with Majorana fermions. Majorana fermions have been understood
by the spin representations and spin geometry \cite{LM90}.
 Majorana zero modes  have a new  geometry beyond spin geometry compared to Majorana fermions. 
 The study of topological insulators is basically to understand the topology of time reversal symmetry and the geometry of Majorana zero modes. 

The topological $\mathbb{Z}_2$ invariant is a parity anomaly in quantum field theory, as a global anomaly, in general it is difficult to compute.
The idea is to translate the parity anomaly into a gauge anomaly, and consider a relevant  index problem.
We will explain that the topological $\mathbb{Z}_2$ invariant can be simply understood via a mod 2 index
theorem resolving the gauge anomaly.
 More precisely, the mod 2 index theorem can be spelled out as  (1) that the analytical $\mathbb{Z}_2$ index of the effective Hamiltonian counts the parity of Majorana zero modes
 and (2) that
the odd topological  index (or its variation) gives a local formula to compute the analytical index.
In a modern language, the mod 2  index theorem can be reformulated as the index pairing between   $KR$-homology and $KR$-theory, 
which will be useful for further generalizations in noncommutative geometry.

Physicists originally proposed two different ways to model this $\mathbb{Z}_2$-valued   invariant.
The Kane--Mele invariant  was first defined in the quantum spin Hall effect in graphene \cite{KM05},
and subsequently generalized to 3d  topological
insulators  \cite{FKM07}. It is  defined as the product of the signs of Pfaffians over the fixed points of the time reversal symmetry.
In contrast to the discrete Pfaffian formalism of the Kane--Mele invariant, the Chern--Simons invariant has an integral form
as the odd topological index of a specific gauge transformation induced by the time reversal symmetry \cite{QHZ08}.
In string theory, the Chern--Simons invariant is also called the Wess--Zumino--Witten (WZW) topological term.
The WZW term is an action functional in gauge theory, and the Kane--Mele invariant is derived from an effective quantum field theory.
In fact, the Kane--Mele invariant and the Chern--Simons invariant are equivalent, so we view them as different aspects of the same $\mathbb{Z}_2$ invariant,
which is the so-called topological $\mathbb{Z}_2$  invariant.
For a summary of these results, see e.g. \cite{KLW15}.

In the various proposed ways, the $\Z_2$ invariant takes values in different abstract groups.
For instance in the 3d case, in several identifications $\Z_2$ is just a factor of a product,
with other factors corresponding to weak topological insulators,
or other times a factor $\Z$ appears that is disregarded, see e.g.\ \cite[Theroem 11.14]{FM13} where the relevant group is $\Z\oplus \Z_2^{\times 4}$ and
\cite[Equation (27)]{K09} where the relevant group is $\Z\oplus \Z_2^{\times 3}$.
In our treatment, we naturally obtain a unique receptacle for the $\Z_2$ invariant.
The analytical and topological index both point out the $\mathbb{Z}_2$ invariant really lives in $KO^{-2}(pt)$.
That is, the abstract group, in which the $\Z_2$ invariant lies, is isomorphic just to $\Z_2$.
Index theory has the advantage over K-theory in that
 local geometric pictures can be seen from an index theorem by looking at, for example, the spectral flow. 
 The fact that the topological $\mathbb{Z}_2$ invariant belongs to $KO^{-2}(pt)$  cannot be seen from  K-theory only.

In order to understand the mod 2 index theorem, we study the version of $K$-theory  relevant for the systems of
topological insulators under study. It is the Quaternionic K-theory, i.e., $KQ$-theory, since the time reversal symmetry introduces a real structure.  
This furnishes the right framework to study the index theory of Majorana zero modes.  The Quaternionic $K$-theory is related to 
the Real K-theory, i.e.,  $KR$-theory, by a degree shifting isomorphism, but distinct from the real $K$-theory, i.e., $KO$-theory.
We briefly review the relations also in \S\ref{quatsec}, more details can be found in \cite{Karoubi}. The reason that $KQ$ appears naturally is that if one considers
the topological band theory of a fermionic system, one is actually working on a Hilbert bundle over the momentum space,
which is called the Brillouin zone in condensed matter physics.
Taking time reversal symmetry into account, the Hilbert bundle becomes  a Quaternionic vector bundle equipped 
with anti-involutions which do not fix the base space, but are rather compatible with an involution on the base space.
Hence $KQ$-theory can be used to classify all possible  band structures of type AII topological insulators.

In spin geometry and noncommutative geometry, KR-cycles are used to model spinors with real structures.
A KR-cycle gives a canonical representative of the fundamental class in KR-homology, which is closely related to the KO-orientation.   
As a (Bogoliubov) quasi-particle, a Kramers pair consists of an electron and its mirror partner under time reversal symmetry. 
The localization of a Kramers pair gives a localized Majorana zero mode (around a fixed point). 
An electronic chiral state in a Majorana zero mode can be modeled by a KR-cycle,
so  a Majorana zero mode is described by a coupled product of two KR-cycles.

As an instance of the holographic principle,  the bulk-boundary correspondence plays an important role in topological insulators. 
If the bulk and boundary theory are modeled by K-theory, then the bulk-boundary correspondence gives rise to a homomorphism (not necessary to be an isomorphism)
between the bulk $K$-theory and the boundary $K$-theory.
In other words, the bulk-boundary correspondence falls into the category of bivariant K-theory, i.e., KK-theory. 
The equivalence between the mod 2 topological index and the Kane--Mele invariant gives hints on the geometry of the bulk-boundary correspondence. 
The bulk is given by the momentum space, and our novel observation is that the effective boundary can be identified as the set of fixed points,
since the Kane--Mele invariant is defined over the fixed points. 
Based on the Baum--Connes isomorphism for the torus, We  explain the bulk-boundary correspondence by a concrete example.
In fact, the bulk-boundary correspondence in a topological insulator realizes  the same  mod 2 index theorem.

In our approach, we focus on the topological band theory, the geometry of Majorana zero modes, and the index theorem connecting these components.
Different independent, concurred or subsequent  approaches using different techniques can be found in the literature.
These include approaches via $C^*$-algebras including a definition for topological insulators in that setting \cite{K15}, Roe algebra \cite{K16},
twisted crossed product \cite{T16}, extensions and $KK$-theory \cite{BCR16,BKR16} and lattice models \cite{KK16}. 
All the approaches, including ours, yield topological invariants that are robust, 
which is stable under small perturbations such as disorder.
An alternative bulk-boundary correspondence in momentum space is studied in \cite{MT16}  
by using a Fourier--Mukai transform in the guise of T-duality applied to the real space bulk-boundary correspondence.

Calculations for these types of invariants are usually done using long exact sequences. Their existence is tied to the topological properties of the base space. 
To this end we include a careful analysis of the conditions on the base space and the $\Z_2$ action on it. 
With different restrictions, we show that different long exact sequences become available and we link them to different definitions of the $\Z_2$ invariant that have appeared in the literature.

In summary, in this work, we give a complete explanation of
the topological $\mathbb{Z}_2$ invariant in the framework of index theory and $K$-theory.
We provide the following new results on topological insulators:
\begin{itemize}

 \item  By localization, we interpret the topological $\mathbb{Z}_2$ invariant as the mod 2 analytical index of the effective Hamiltonian of a topological insulator.

 \item  By comparing   Majorana zero modes with   Majorana fermions in spin geometry, we model localized Majorana zero modes by
 a coupled product of two $KR$-cycles. 

 \item We establish the mod 2 index theorem of topological insulators, that is, the odd topological index (or its variation) computes the analytical index.  
Moreover, the mod 2 index can be obtained by the index pairing between $KR$-homology and $KR$-theory.

 \item We derive the equivalence between the topological $\mathbb{Z}_2$ index (or Chern--Simons invariant) and the Kane--Mele invariant, which
 is  the  genuine bulk-boundary correspondence.

 \item We propose to identify the effective  boundary as the fixed points of the time reversal symmetry,
 and explain the bulk-boundary correspondence by a motivating example. 
\end{itemize}

This article is organized in the following way. To study  topological band theory,
we  review  basic facts  about KQ-theory in \S \hyperref[sec:KQ]{2}. This section also includes the conditions on the base space and the respective sequences in K-theory.
\S \hyperref[sec:Aind]{3} focuses on the analytical theory of  Majorana zero modes, and  the topological $\mathbb{Z}_2$ invariant is interpreted as
a mod 2 analytical index. In \S \hyperref[sec:Tind]{4}, the mod 2 topological index is discussed in KR-theory, and the mod 2 index theorem is established. Finally, the bulk-boundary
correspondence in KK-theory is  discussed in \S \hyperref[sec:Bbcorr]{5}.

\section{KQ-theory}\label{sec:KQ}

In this section, we will first introduce the  time reversal  symmetry and study the topological band theory of a  topological insulator by a Hilbert
bundle, which is a Quaternionic vector bundle over the momentum space. As a result, the Quaternionic K-theory, i.e., $KQ$-theory, can be used  to classify all possible band structures of a topological insulator.
Finally, the real Baum--Connes isomorphism for the free discrete group $\mathbb{Z}^d$ will be briefly reviewed as a preparation for the bulk-boundary correspondence.

\subsection{Time reversal symmetry}

Let $X$ be a compact space,  which is viewed as the momentum space of  a topological insulator.
\begin{examp}
A lattice in $\mathbb{R}^d$ is a free abelian group isomorphic to $\mathbb{Z}^d$ and its Pontryagin dual is the torus $\mathbb{T}^d$,
 the simplest and most important example of a momentum space is  $X = \mathbb{T}^d$.
\end{examp}
\begin{examp}
The limit of a lattice model is the continuous model defined on $\mathbb{R}^d$ and its Pontryagin dual is itself,
in this case the momentum space is the one point compactification of $\mathbb{R}^d$, i.e.,  $X = (\mathbb{R}^d)^+ = \mathbb{S}^d$.
\end{examp}

\begin{defn}
An involutive space $(X, \tau)$ is a compact space $X$ equipped with an involution, i.e., a homeomorphism $\tau: X \rightarrow X$
such that $\tau^2 = id_X$.
\end{defn}
Since the involution $\tau$ is usually taken as the complex conjugation, $(X, \tau)$ is also called a Real space, 
which was first introduced by Atiyah in the  Real K-theory, i.e., $KR$-theory  \cite{A66}.
As a convention,  define $ \mathbb{R}^{p,q} : = \mathbb{R}^{p} \oplus i \mathbb{R}^{q}$,  the involution $\tau$ on $\mathbb{R}^{p,q}$ is defined by the complex conjugation, 
 i.e., $\tau : (x, y) \rightarrow (x, -y)$, or equivalently $\tau|_{\mathbb{R}^{p}} = 1$ and  $\tau|_{i \mathbb{R}^{q}} = -1$. 

Time reversal symmetry is a fundamental symmetry of physical laws, which is the  $\mathbb{Z}_2$ (as a notation $\mathbb{Z}_2 \cong \mathbb{Z}/2 \mathbb{Z}$) symmetry  defined by the
 map $T:   t \mapsto -t$ reversing the direction of time.
 The action of time reversal symmetry
 on a momentum space is basically to change the sign of its local coordinates  because $T$ changes the sign of the imaginary unit $T:   i \mapsto -i$.
\begin{defn}
 Time reversal symmetry defines the time reversal transformation on the momentum space $X$,
 \begin{equation*}
   \tau: X \rightarrow X; \quad   {x} \mapsto \tau({x}) = -x
 \end{equation*}
so that $(X, \tau)$ is an involutive space.
\end{defn}
   The time reversal transformation may vary according to the choice of coordinate system on $X$.

\begin{examp}
   The unit sphere $\mathbb{S}^d$ in Cartesian coordinates is
   $$
   \mathbb{S}^d = \{ (x_0, x_1, \cdots x_d) \in \mathbb{R}^{d+1} ~|~ x_0^2 + x_1^2 + \cdots x_d^2 = 1\}
   $$
   the time reversal transformation on $\mathbb{S}^d$ is defined by
   $$
   \tau: \mathbb{S}^d \rightarrow \mathbb{S}^d; \quad  (x_0, x_1, \cdots x_d) \mapsto  (x_0, -x_1, \cdots -x_d)
   $$
   so the unit sphere under this involution is also denoted by $ \mathbb{S}^{1, d} \subset \mathbb{R}^{1, d}$.
\end{examp}

\begin{examp}
   If the torus $\mathbb{T}^d$ is parametrized by the angles,
   $$
   \mathbb{T}^d = \{ (e^{i\theta_1}, \cdots , e^{i\theta_d}) ~|~ \theta_i \in [-\pi, \pi ]\, mod\,\, 2\pi, i = 1, \cdots d \}
   $$
   then the time reversal transformation on $\mathbb{T}^d$ is defined by
   $$
   \tau:  \mathbb{T}^d \rightarrow \mathbb{T}^d; \quad (e^{i\theta_1}, \cdots , e^{i\theta_d}) \mapsto (e^{-i\theta_1}, \cdots , e^{-i\theta_d})
   $$
\end{examp}

The fixed points of an involution $\tau$ is the set of points
\begin{equation*}
   X^\tau := \{ {x} \in X ~|~ \tau( {x} ) =  {x} \}
\end{equation*}
which is always assumed to be a finite set.
\begin{examp}
 The unit sphere $\mathbb{S}^{1,d}$ has $2$ fixed points under the time reversal transformation, $(\mathbb{S}^{1,d})^\tau = \{ (\pm1, 0, \cdots, 0) \}$.
\end{examp}

\begin{examp}
   The torus $\mathbb{T}^d$ has $2^d$ fixed points under the time reversal transformation,  $(\mathbb{T}^d)^\tau = \{ (\pm1, \pm1, \cdots, \pm1) \}$ when $\theta_i = 0, \pi$.
\end{examp}

 The involutive space $(X, \tau)$ has the structure of a $\mathbb{Z}_2$-CW complex,
 that is, there exists a $\mathbb{Z}_2$-equivariant cellular decomposition of $X$, see below.
 In the other way around,
 starting with the fixed points $X^\tau$, $X$ can be built up by gluing cells that carry a free $\mathbb{Z}_2$ action, i.e., $\mathbb{Z}_2$-cells.
 Such equivariant cellular decomposition is very useful in the computation of  K-theory.
 This construction is closely related to the stable homotopy splitting of $X$ into spheres respecting the time reversal symmetry \cite{FM13}. 
 
  We will now decompose the underlying space $X$ according to the action by the time reversal transformation $\tau$.
In the general setting, this decomposition can be quite wild, but in concrete situations it is usually well behaved. For instance, as we explain in \S\ref{seqsec},
in case this decomposition is suitably ``nice'' we can apply different long exact sequences to compute the invariants.

We will assume that the Real space $(X, \tau)$ is {\em tame}. For a connected $X$ this means that
there  are closed connected fundamental domains $V_{\pm}$, such that $\tau(V_\pm)=V_\mp$, $X=V_+\cup V_-$
and $B=V_+\cap V_-$ is the closed boundary of both $V_+$ and $V_-$. That is  $V_\pm=V^o_\pm\amalg B$ as sets,
with  $\tau(V^o_{\pm})=V^o_{\mp}$ open and $B$ is closed of
codimension greater or equal to 1. Such $B$ separates, namely,
 $V^o_+$ and $V^o_-$ occupy different components of $X\setminus X^\tau$.
 Here and in the following $\amalg$ means the disjoint union. For a general $X$, being tame means that each connected component of $X$ is tame.
This is for instance the case for a Riemannian manifold $X$ where
$V_\pm$ are given by Dirichlet fundamental domains, a.k.a., Voronoi cells. We call a tame space {\em regular}, if
we can find a decomposition $X=X_+\amalg X_-\amalg X^{\tau}$
where $\tau (X_{\pm})=X_{\mp}$ such that $\bar X_\pm=V_\pm$,  $X_\pm$ and $X\setminus X^\tau$  are locally compact.
This is the case for all the examples that we will consider including the Examples given above,  see Figure \ref{torusfig}. 

\begin{figure} 

\includegraphics[width=4cm]{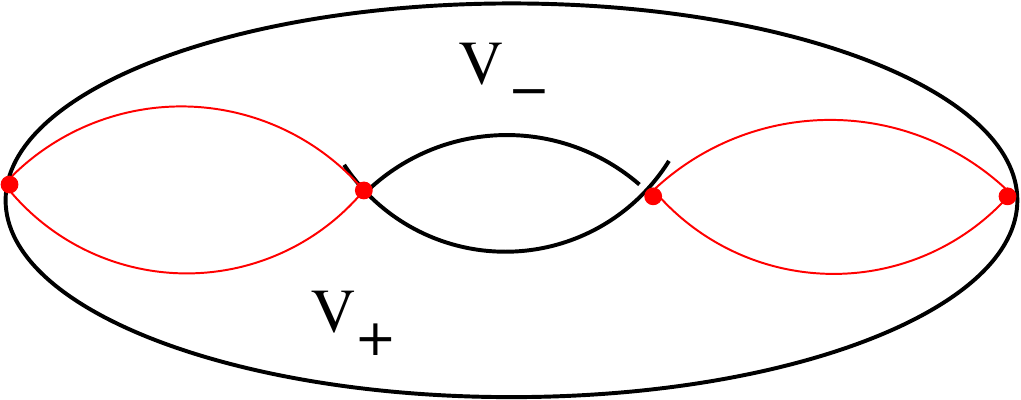}\hfill
\includegraphics[width=2cm]{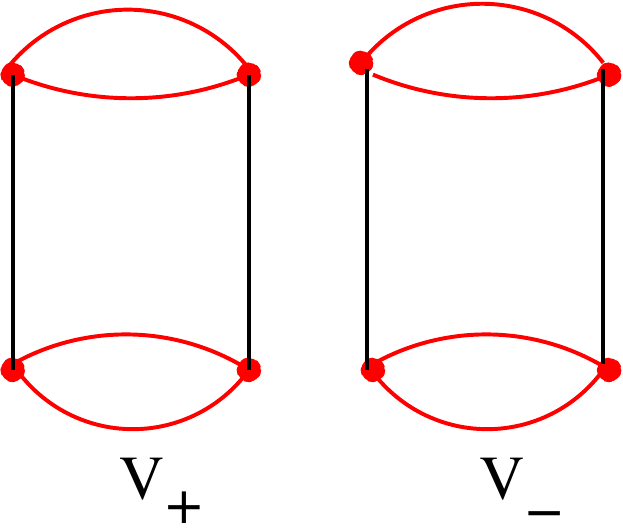} \hfill
\includegraphics[width=2cm]{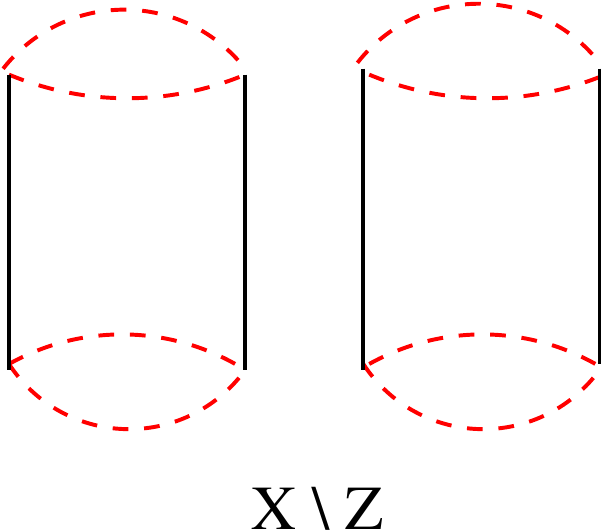} \hfill
\includegraphics[width=2cm]{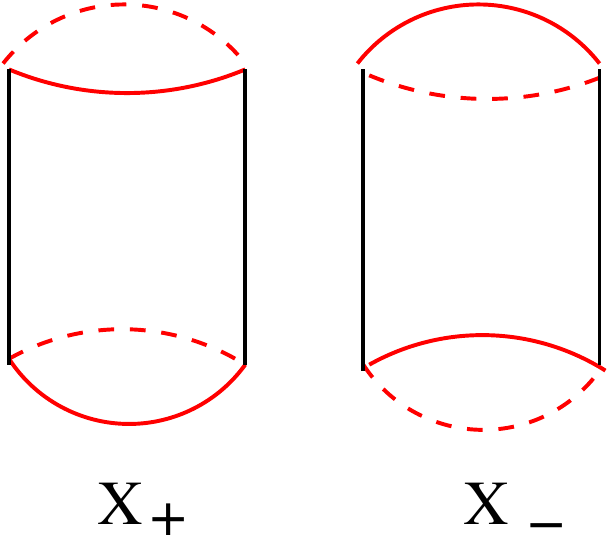}

\caption{\label{torusfig}From the left to right: (a) the torus with two fundamental domains $V_+,V_-$ with
$Z=V_+\cap V_-=\mathbb{S}^{1,1}\amalg \mathbb{S}^{1,1}$ in red,
(b) the disjoint union of $V_+$ and $V_-$, (c) the complement of $Z$, and (d) the subsets $X_\pm$. Dotted lines indicate open boundaries.}
\end{figure}

Another class, which is the most important for applications is when $(X,\tau)$ is a finite $\Z_2$-equivariant regular CW complex. This means that for all cells $C$ of dimension $k$,  $\tau(C)$ is also a cell of dimension $k$.
  In this case, we can decompose $X=V_+\cup V_-$ as above where now $V_\pm$ are sub CW complexes.
Moreover, $V_\pm=\bar O_\pm$ where $O_\pm=\amalg_{C\in V_\pm} C^o$ is the union of all the interiors of the top dimensional cells in $V_\pm$,
that is, those that are not at the boundary of any other cell. We denote the closed cells by $C$ and write $C^o$ for their open interior.

There is also another decomposition, which we will use $X=X_+\amalg X_-\amalg X^{\tau}$.
To do this, we assign $+$, $-$ or $fix$ to all cells,  inductively, by choosing fundamental domains as above starting
with the dimension zero cells and using induction on the $k$-skeleton. We choose $+$ and $-$ for the cells interchanged by $\tau$ and $fix$ for all the cells fixed by $\tau$.
We will call them $+$, $-$ or fixed cells.  The induction ensures that no $+$ cell lies at the boundary of only $-$ cells.
Notice that  fixed points do not lie in the interior of any $+$ or $-$ cell, moreover,   $X^{\tau}=\amalg_{C: \text{fixed cell}}C^o $.
Set $X_+=\amalg_{C: + \text{ cell}}C^o$,
$X_-=\amalg_{C: - \text{ cell}}C^o $.
We call such a CW complex  {\em  weak $\mathbb{Z}_2$-space}  if there is a choice of $\pm$ such that each skeleton $X^k$ is regular,
as defined above, with  respect to the decomposition above.  This is for instance the case, if $X$ is a compact manifold and $X^{\tau}$ is discrete, which encompasses all the examples from the literature.
In particular this means that if $Z=V_+\cap V_-$ then $Z$ is again a weak $\mathbb{Z}_2$-space and
one can use induction.

\begin{examp} All the $\mathbb{T}^d$ and $\mathbb{S}^{1,d}$ are of this type. For $d=0$ the space consists of two points, marked by $fix$. For $d=1$ one adds two intervals joining the points,  marked by $+$ and $-$.
For $\mathbb{S}^{1,d}$, we realize it as $\mathbb{R}^{0,d}\cup \{\infty\}$.
Mark $\infty$  by $fix$ and then mark $\mathbb{R}^{0,d}$ by decomposing it w.r.t.\ the iterated upper and lower half spaces, marking the upper half space by $+$ and the lower by $-$.
This agrees with the decomposition as $\mathbb{R}^{0,d}=\mathbb{R}^{0,1}\times \dots\times \mathbb{R}^{0,1}$.
Similarly, we can define the decomposition of $\mathbb{T}^{d}=\mathbb{S}^{1,1}\times \dots \times \mathbb{S}^{1,1}$, see Figure \ref{ABfig}.
\end{examp}

\begin{figure}
\includegraphics[width=2cm]{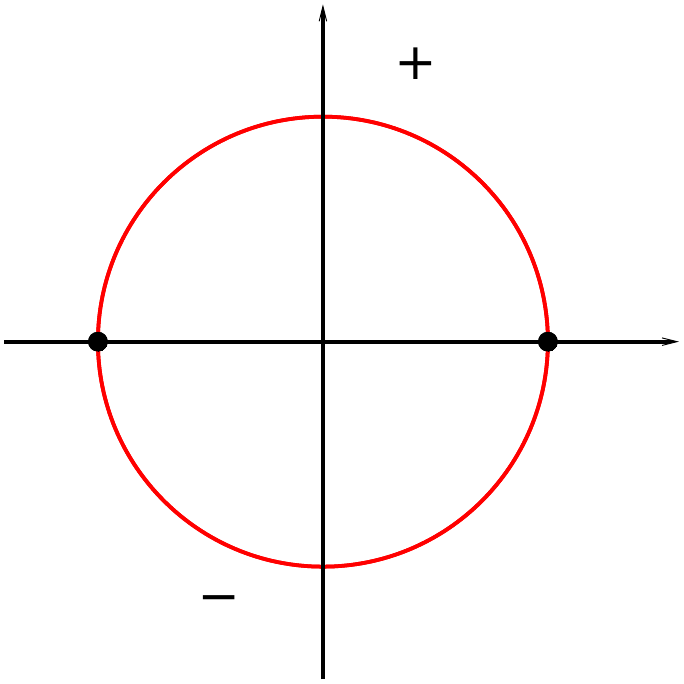}\hfill
\includegraphics[width=2cm]{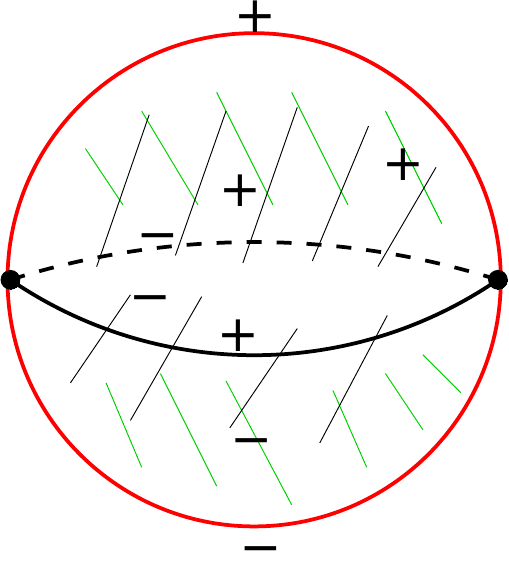}\hfill
\includegraphics[width=2cm]{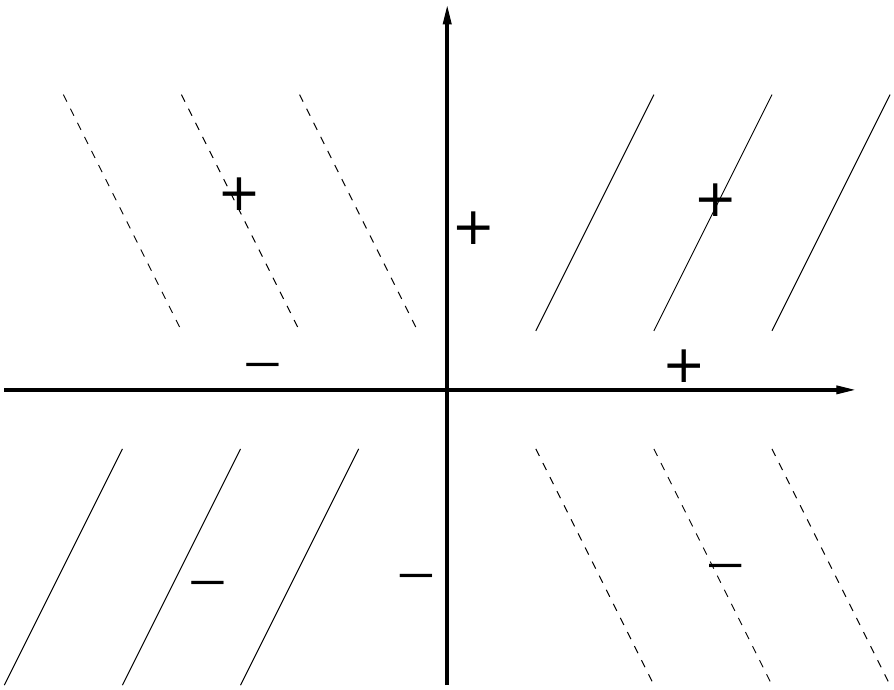}\hfill
\includegraphics[width=2cm]{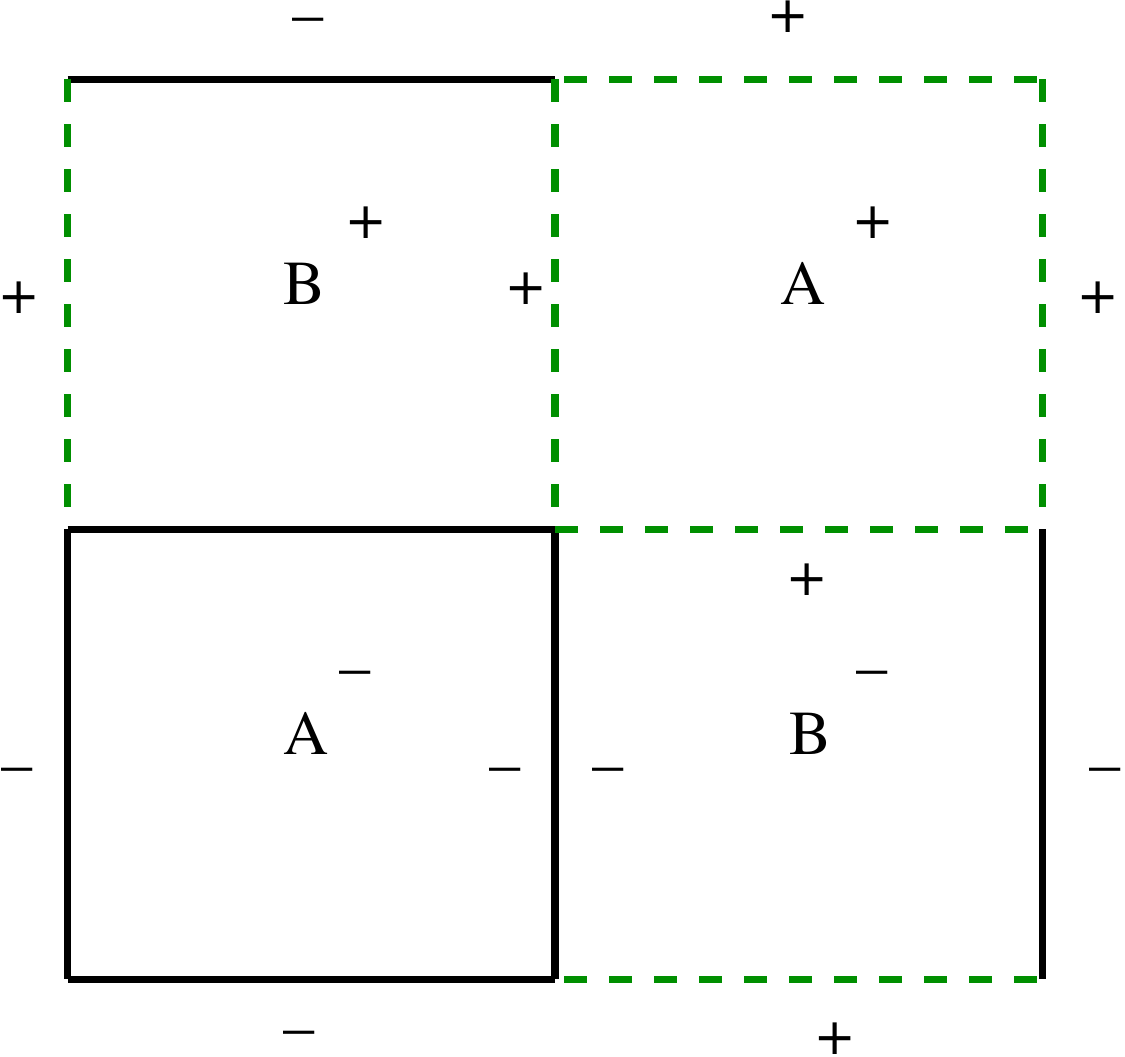}
\caption{\label{ABfig} The $\mathbb{Z}_2$-CW decompositions of (a) $\mathbb{S}^{1,1}$,(b)
$\mathbb{S}^{1,2}$, with 2 0-cells, 4 1-cells and 4 2-cells,
corresponding to the one-point compactification of $\mathbb{R}^{0,2}$ decomposed into quadrants and half lines depicted in (c), and (d) the decomposition of $\mathbb{T}^{2}$.}
\end{figure}

\begin{defn}
   A complex vector bundle is said to be a Hilbert bundle if  it is equipped with a complete continuously varying Hermitian inner product so that  each fiber is a Hilbert space. The completeness is automatic in the finite dimensional case.
\end{defn}

A Hilbert bundle (with a flat connection) was introduced to model continuous fields of Hilbert spaces in geometric quantization, for example see \cite{ADW91}.
Given an effective Hamiltonian of  a topological insulator,
let us consider the corresponding Hilbert bundle $\pi: \mathcal{H} \rightarrow X$,
which describes the band structure of the topological insulator over the momentum space $X$.
The space of physical states (in the momentum representation) is modeled by the Hilbert space of (local) sections $\H=\Gamma(X,\mathcal{H})$ with fiber-wise inner product.

\begin{defn}
 A Quaternionic vector bundle $(E, \chi)$ over an involutive space $(X, \tau)$ is a complex vector bundle $E$ over $X$ equipped with an
 anti-linear anti-involution $\chi$ (that is compatible with $\tau$). The anti-involution $\chi$ is an anti-linear bundle isomorphism  $\chi: E \rightarrow E$  such that $\chi^2 = -id_E$.
This entails that $\chi$ induces an anti--linear isomorphism between the fibers  $E_x$ over $x$ and    $E_{\tau(x)}$ over $\tau(x)$.
\end{defn}

In the above definition, if the anti-involution $\chi$ (s.t. $\chi^2 = -1$) is replaced by an involution $\iota$
(s.t. $\iota^2 = 1$), then $(E, \iota)$ is a Real space and
the pair $(E, \iota)$ defines a Real vector bundle over $(X, \tau)$.

For a Quaternionic vector bundle $\pi: (E, \chi) \rightarrow  (X, \tau)$, the compatibility condition between $\chi$ and $\tau$ is
obviously $\pi \circ \chi = \tau \circ \pi$.
For any section $s: (X, \tau) \rightarrow (E, \chi)$, the above compatibility condition implies that
for any $x \in X$, $s \circ \tau(x)$ and $\chi \circ s(x)$ are in the same fiber.
In order to compare two sections,  we define an action $\varsigma$ on the space of continuous sections $\Gamma(X, E)$,
$$
\varsigma: \Gamma(X, E) \rightarrow \Gamma(X, E); \quad s \mapsto - \chi \circ s \circ \tau
$$
$\varsigma$ is itself an anti-involution such that $\varsigma^2 = -1$.
The negative sign in the definition of $\varsigma$ comes from the anti-involution $\chi$ in the following way.
Comparing the above two sections, one would have the difference $s\circ\tau-\chi\circ s$.
Moving this back to the original fiber by applying $\chi$, the difference becomes $\chi\circ s\circ\tau+s=s-\varsigma(s)$.

At a fixed point  $x \in X^\tau$, the restriction $\chi_x: E_x \rightarrow E_x$ is an
anti-linear map such that $\chi_x^2 = -1$ so that $\chi_x$ gives  a quaternionic structure on $E_x$, i.e., a fiber preserving action of the quaternions. If the complex dimension of $E_x$ is even, say $\dim_\mathbb{C}E_x = 2n$,
then $E_x$ can be viewed as
a vector space defined over the quaternions $\mathbb{H}$ with quaternionic dimension $\dim_\mathbb{H}E_x = n$. From now on, we
 assume that the Quaternionic vector bundle $(E, \chi)$ has even rank, say $rank(E) = 2n$.
Let $i: X^\tau  \hookrightarrow X$ be the inclusion map, when restricting $E$ to the set of fixed points,
$i^*E \rightarrow X^\tau$ turns into a quaternionic vector bundle, i.e., each fiber is a quaternionic vector space.
The reader should not confuse quaternionic vector bundles with Quaternionic vector bundles.

Over the fixed points $X^\tau$, if an inner product is chosen on the quaternionic vector bundle
$i^*E \rightarrow X^\tau$, i.e., a Hilbert bundle with the quaternionic structure induced by $\chi$, then it naturally gives rise to a symplectic structure $\omega$
so that $(i^*E, \omega) \rightarrow X^\tau$ becomes  a symplectic vector bundle, i.e., each fiber is a symplectic vector space, see e.g.\ \cite{KLW15}[3.3.1].

 Time reversal symmetry can be represented by the time reversal operator
 $\Theta$ acting on the Hilbert space $\H$.
  In general, the  time reversal operator can be defined as
 a product  $\Theta := UC$, where $U$ is a unitary operator and $C$ is the complex conjugation \cite{Wigner}. Hence, $\Theta$ is  an anti-unitary operator, that is, for $\psi, \phi \in \Gamma(X, \mathcal{H})$,
\begin{equation*}
  \langle \Theta \psi, \Theta \phi \rangle = \overline{\langle  \psi,  \phi  \rangle} = \langle  \phi, \psi \rangle, \quad \quad \Theta (a \psi + b \phi) = \bar{a} \Theta\psi + \bar{b} \Theta \phi, \,\,\,\, a, b\in \mathbb{C}
\end{equation*} Since $\Theta$ is acting on fermionic states,
it has the important property $\Theta^2 = -1$ (or $\Theta^* = - \Theta$).

\begin{examp}
 Taking  time reversal symmetry into account, the Hilbert bundle over the momentum space $\pi: \mathcal{H} \rightarrow X$ becomes
   a Quaternionic Hilbert bundle $\pi: (\mathcal{H}, \Theta) \rightarrow (X, \tau)$, where $\tau$ is the time reversal transformation
   on the base space
   (s.t. $\tau^2 = id_X$) and $\Theta$ is  the time reversal operator (that is an anti-unitary operator satisfying $\Theta^2 = - id_\mathcal{H}$).
   The action of time reversal symmetry on the momentum space $X$  is given by $x\mapsto \tau(x)$,
 which can be lifted to an anti-linear anti-involution $\Theta$ on the Hilbert bundle $\mathcal{H} \rightarrow X$. 
\end{examp}


A time reversal invariant Hamiltonian  $H=H(x)$  must satisfy the condition
   \begin{equation*}
   \label{hameq}
  \Theta  H(x) \Theta^{*} = H(\tau(x)) \quad
    \text{for any $x \in X$}
    \end{equation*}
For  a physical electronic state $\phi$, 
one considers the other state $\Theta \phi$ under time reversal symmetry. If $\phi$ is an eigenstate of $H$ satisfying  $H(x)\phi(x)=E(x)\phi(x)$, 
then $\Theta \phi$ is an eigenstate of $\Theta H \Theta^*$ with the same energy $E$, 
$$
 [\Theta H(x) \Theta^*] \Theta \phi (x)  = E(\tau(x))\Theta \phi (x) 
$$
This means that the states $\phi$ and $\Theta\phi$ fall into the same band with the band function $E: X \rightarrow \mathbb{R}$. 
So due to time reversal symmetry,  each band is doubly degenerate, which is called Kramers degeneracy, and the pair $(\phi,\Theta \phi)$ is called a Kramers pair.
The key property $\Theta^2 = -1$  implies that a Kramers pair is orthogonal, i.e., $\langle \phi, \Theta \phi\rangle=0$. 

The following facts are discussed in detail  in \cite{KLW15}.
With time reversal symmetry  the rank of the Hilbert bundle $\mathcal{H}$ is always assumed to be even, say $rank( \mathcal{H}) = 2N$.
We also assume that time reversal symmetry is the only symmetry acting on the Hilbert bundle. So the Hilbert bundle is non-degenerate
in the sense that there is no band crossings between different bands. For each band, the only degeneration is due to Kramers degeneracy,
that is, each band is doubly degenerate. All possible intersections between a Kramers pair only could happen at the fixed points.
As a consequence, there exists a decomposition
\begin{equation}\label{decompeq}
\mathcal{H} = \bigoplus_{i=1}^N \mathcal{H}_i
\end{equation}
where $\mathcal{H}_i \rightarrow X$ is a rank 2 non-trivial subbundle. We set $\H_i=\Gamma(X,\mathcal{H}_i)$,
then $\H=\bigoplus_{i=1}^N\H_i$ and $\Theta$ respects this decomposition.

If there was no time reversal symmetry, then each band  can be modeled by a complex line bundle $\mathcal{L}_i \rightarrow  X$. 
When the time reversal symmetry is switched on, each band  is modeled by a rank 2 Hilbert subbundle 
$\pi: \mathcal{H}_i \rightarrow X$. By the existence of a Kramers pair    $(\phi_i, \Theta \phi_i)$,  $\mathcal{H}_i$ can be decomposed into   
\begin{equation}
 \mathcal{H}_i = \mathcal{L}_i \oplus \Theta{\mathcal{L}}_i, \quad \text{with} \quad    \Theta{\mathcal{L}}_i \cong \tau^*(\overline{\mathcal{L}}_i) 
\end{equation}
where  $\tau^*(\overline{\mathcal{L}}_i)$ is the pullback of the conjugate bundle since $\Theta$ is an anti-linear bundle isomorphism. 
Inspired by the physical picture, we assume $\phi_i$ (or $\Theta \phi_i$) gives rise to a global section of $\mathcal{L}_i$ (or $\Theta \mathcal{L}_i$), so that $\mathcal{L}_i$  (or $\Theta \mathcal{L}_i$) is isomorphic to a trivializable line bundle.
However, the Hilbert subbundle $\mathcal{H}_i$ is not necessary to be a trivial bundle since there are intersections between $\phi_i$ and $\Theta \phi_i$. 
By the decomposition of $\mathcal{H}_i$, it is similar to a spinor bundle, since the time reversal operator $\Theta$ switches the chirality of a Kramers pair. 
In the literature, some authors assume the Hilbert bundle is a trivial complex vector bundle, for example see \cite{G17}.

\begin{lemma}
   Define the   transition function of the Hilbert subbundle $\mathcal{H}_i$  by
   \begin{equation} \label{TransFunc}
      w^i: X   \rightarrow U(2); \quad w^i(x) := \begin{pmatrix}
                                                  0 &       \langle \phi_i({x})\, , \, \varsigma \phi_i ({x})\rangle  \\
                                                      \langle \Theta \phi_i({x})\, , \,  \varsigma (\Theta \phi_i) ({x})\rangle &    0
                                                 \end{pmatrix}
   \end{equation}
   then the time reversal symmetry  is  represented by $w^i \circ C$ on  $\mathcal{H}_i$, where $C$ is the complex conjugation.
\end{lemma}
\begin{proof}
  The anti-involution $\varsigma$ acting on  $\Gamma(X, \mathcal{H}_i)$ is defined by
    \begin{equation*}
              \varsigma(\psi) = -\Theta \circ \psi \circ \tau, \quad \psi \in \Gamma(X, \mathcal{H}_i)
     \end{equation*}
     The space of sections $\Gamma(X, \mathcal{H}_i)$ is generated by  a pair of continuous global sections  $(\phi_i, \Theta \phi_{i})$ (i.e., a Kramers pair). 
     Since the Kramers pair $(\phi_i, \Theta \phi_{i})$ is orthogonal, i.e.,   
     $$
     \langle \phi_i({x})\, , \, \Theta \phi_i ({x})\rangle = 0 
     $$ the diagonal terms in $ w^i$ vanish. By the physical property of time reversal symmetry, $\Theta \phi_i(x)$ is the same as $\phi_i(\tau(x))$ up to a phase, so the off-diagonal terms
     in $ w^i$ keep track of the phases, for more details see \cite{KLW15}.
\end{proof}

In physics, the transition function $w^i$ is also called the (global) gauge transformation induced by the time reversal symmetry, so  
the subbundle $\mathcal{H}_i$ has the gauge group  $U(2)$. The transition function is the key to constructing a local formula to compute the
topological $\mathbb{Z}_2$ invariant.
Based on $w^i$, the parity anomaly of the topological $\mathbb{Z}_2$ invariant is translated into a gauge anomaly of the topological band theory. 
In the literature, $w^i$ is sometimes called a sewing matrix \cite{G17}.

Let us choose an open subset $O \subset X$,
then the local isomorphism $\Theta : \mathcal{H}_i|_{O } \rightarrow \mathcal{H}_i|_{\tau(O) } $ is represented by
\begin{equation*}
   \Theta:  O \times \mathbb{C}^2 \rightarrow  \tau(O)  \times \mathbb{C}^2, \quad (x, v) \mapsto (\tau({x}), w^i(x) \, \bar{v})
\end{equation*}
Apply $\Theta$ twice to get back to $O$,
\begin{equation*}
   (x, v) \mapsto (\tau({x}), w^i(x) \, \bar{v}) \mapsto (x, w^i(\tau({x}))\bar{w}^i(x) \, {v})
\end{equation*}
because of $\Theta^2 = -1$, we have
\begin{equation}
   w^i(\tau({x}))\bar{w}^i(x) = -I_2, \quad \text{i.e.,} \quad  [w^i]^T(\tau({x})) = -{w}^i(x), \quad \forall \,\, x \in X
\end{equation}
where $T$ stands for the  transpose of a matrix.
In particular,  $w^i$ is skew-symmetric at any fixed point,
\begin{equation}
   [w^i]^T(x) =- w^i ({x}), \quad \forall \,\, x \in X^\tau 
\end{equation}

\begin{examp}
  When $X = \mathbb{S}^3 = \{ (\alpha, \beta) \in \mathbb{C}^2, \, s.t. \, |\alpha|^2 + |\beta|^2 = 1 \}$,
   the time reversal transformation is defined by $\tau (\alpha, \beta) = (\bar{\alpha}, - \beta)$,
  and the fixed points are    $(\alpha, \beta) = (\pm 1, 0)$.
  In addition, the transition function $w$ is given by
  $$
  w: \mathbb{S}^3 \rightarrow SU(2); \quad ( \alpha, \beta) \mapsto \begin{pmatrix}
                                                                     \beta & \alpha \\
                                                                     -\bar{\alpha} & \bar{\beta}
                                                                    \end{pmatrix}
  $$
\end{examp}

The total transition function over the Hilbert bundle $\mathcal{H}$ is defined by the block diagonal matrix $w = diag\{w_1, \cdots, w_n \}$,
i.e., $w: X \rightarrow U(2n)$. Furthermore, a non-degenerate Quaternionic vector bundle is characterized
by the transition function $w$. However, the characteristic feature of a topological insulator is completely determined by the top band (indeed the edge states in the band gap),
so we tend to assume the Hilbert bundle has rank two.

\subsection{Quaternionic K-theory}\label{quatsec}

\begin{defn} For a compact Real space $(X, \tau)$, the Quaternionic K-group $KQ(X, \tau)$
is defined to be the Grothendieck group of finite rank Quaternionic vector bundles $(E, \chi)$ over  $(X, \tau)$.
\end{defn}

From the previous subsection, we know that the Hilbert bundle of a topological insulator is a finite rank Quaternionic vector bundle over
the momentum space, which describes the band structure in the presence of the time reversal symmetry.
Note that the Quaternionic K-group $KQ(X, \tau)$ classifies stable isomorphism classes of Quaternionic vector bundles over $X$.
On an  even rank trivial bundle  $X \times \mathbb{C}^{2k}$, there exists a natural quaternionic  structure acting on the fibers \cite{DG15}.
Hence using the Whitney sum to add a trivial bundle makes perfect sense,  and one can take the stable isomorphism classes of Quaternionic vector bundles.
In fact, adding a trivial bundle does not change the physics of a topological insulator.

When the involution $\tau$ is understood, the Quaternionic K-group $KQ(X, \tau)$ is denoted simply  by $KQ(X)$. 
$KQ$-theory can be extended onto locally compact spaces,  and the reduced  Quaternionic K-group  $\widetilde{KQ}(X)$ is defined as the kernel of the restriction map $ i^*: KQ(X) \rightarrow KQ(pt) = \mathbb{Z}$.

The  Real K-group $KR(X, \tau)$ is similarly defined as the Grothendieck group of Real vector bundles $(E, \iota)$ over $(X, \tau)$, which was first introduced by Atiyah \cite{A66}. 
Higher KR-groups are defined by
$$
KR^{-p, -q}(X) := KR(X \times \mathbb{R}^{p, q})
$$
There exists an isomorphism, called the $(1,1)$-periodicity of KR-theory,
$$
KR^{p+1, q+1}(X) \cong KR^{p,q }(X)
$$
so by convention a KR-group is denoted by $KR^{p-q}(X) =KR^{p,q }(X) $. The Bott periodicity of KR-theory is 8,
$$
KR^{n+ 8 }(X) \cong KR^{n}(X)
$$
There exists a canonical isomorphism 
$$
KQ^*(X) \cong KR^{*-4}(X)
$$
so it is convenient to compute $KQ$-theory by  $KR$-theory.
As a result, the Bott periodicity of KQ-theory is easily derived from that of KR-theory.

When the involution $\tau$ in KR-theory is trivial, it becomes the real K-theory of real vector bundles, i.e., KO-theory, 
$$
KR^n(X, \tau = id) = KO^n(X)
$$
which also has the Bott periodicity 8: $KO^{n}(X) \cong KO^{n+8}(X)$.
The KO-theory of a point is given by the table
\begin{center}
 \begin{tabular}{||c| c| c| c | c | c | c | c |c |c |c ||}
 \hline
   i & 0 & 1 & 2 &  3 & 4 & 5 & 6 & 7  \\ [0.5ex]
 \hline\hline
 $KO^{-i}(pt)$ & $\mathbb{Z}$ & $\mathbb{Z}_2$ & $\mathbb{Z}_2$   & 0 & $\mathbb{Z}$  & 0  & 0 & 0 \\ [1ex]
 \hline
\end{tabular}
\end{center}

By the decomposition of the sphere $\mathbb{S}^{1,d} =  \mathbb{R}^{0,d} \cup \{ \infty \} $, one fixed point of the time reversal symmetry is $\{ 0 \} \in \mathbb{R}^{0,d}$  and
the other is $\{ \infty \}$. One can compute the KR-groups of spheres based on the above decomposition,  
\begin{equation} \label{KRSd}
  KR^{-i} (\mathbb{S}^{1,d}) = KO^{-i}(pt) \oplus KR^{-i}(\mathbb{R}^{0,d}) = KO^{-i}(pt) \oplus  KO^{-i+d}(pt)
\end{equation}
\begin{examp}
$$
KQ(\mathbb{S}^{1,2}) = KR^{-4}(\mathbb{S}^{1,2}) =  KO^{-4}(pt) \oplus  KO^{-2}(pt) =  \mathbb{Z} \oplus \mathbb{Z}_2
$$
$$
KQ(\mathbb{S}^{1,3}) = KR^{-4}(\mathbb{S}^{1,3}) =  KO^{-4}(pt) \oplus  KO^{-1}(pt) =  \mathbb{Z} \oplus \mathbb{Z}_2
$$
\end{examp}

Similarly, one can decompose $\mathbb{T}^d = (\mathbb{S}^{1,1})^d$ into fixed points plus involutive Eulidean spaces
$\mathbb{R}^{0, k}$ ($0 \leq k \leq d$), so that the KQ-groups of $\mathbb{T}^d$ are computed based on an iterative decomposition,
\begin{equation}
 KQ^{-j}(\mathbb{T}^d) = \oplus_{k =0}^d \binom{d}{k} KSp^{-j+k}(pt) = \oplus_{k =0}^d \binom{d}{k} KO^{-4-j+k}(pt)
\end{equation}
Here the $KSp$-theory $KSp(X)$ is the Grothendieck group of symplectic (or quaternionic) vector bundles over $X$, where $Sp$ indicates
its structure group is the compact symplectic group $Sp(n)$.

\begin{examp}
$$
KQ(\mathbb{T}^2) = KR^{-4}(\mathbb{T}^2) =  KO^{-4}(pt) \oplus  KO^{-2}(pt) =  \mathbb{Z} \oplus \mathbb{Z}_2
$$
$$
KQ(\mathbb{T}^3) = KR^{-4}(\mathbb{T}^3) =  KO^{-4}(pt) \oplus  3 KO^{-2}(pt)  \oplus KO^{-1}(pt)=  \mathbb{Z} \oplus 4 \mathbb{Z}_2
$$
\end{examp}
From the above example, notice that $KQ(\mathbb{T}^3)$ has $\mathbb{Z}_2$-components from both $KO^{-1}(pt)$ and $KO^{-2}(pt)$. So it is natural to
ask where does the topological $\mathbb{Z}_2$ invariant really live in, $KO^{-1}(pt)$ or $KO^{-2}(pt)$? In order to answer this question, one has to
look into the local geometry of Majorana zero modes. In the end, index theory will tell us that the topological
$\mathbb{Z}_2$ invariant lives in $KO^{-2}(pt)$, which cannot be seen from K-theory only.

Another way to compute the KQ-theory of  $\mathbb{T}^d$ is to use the
stable homotopy splitting of the torus into spheres, see \cite{FM13}.
The Baum--Connes isomorphism for the free abelian group $\mathbb{Z}^d$ provides yet another method to compute the KQ-theory of $\mathbb{T}^d$.
 The next subsection will discuss about how to compute KQ-theory using long exact sequences. 

\subsection{Long exact sequences}
\label{seqsec}
As a first upshot of our treatment, we can use several long exact sequences in K-theory, that corresponds to the different ways to introduce the $\Z_2$ invariant.

The first is the  long exact sequence in KQ-theory corresponding to the
``exact sequence'' \cite{Karoubi}[II,4.17] $Z \dashrightarrow X \dashrightarrow X\setminus Z$ for a closed $Z$. Here the dashed arrows indicate that we are looking at locally compact spaces and the morphisms in that category which are defined via their one point compactifications. The sequence reads:

\begin{equation}
 \cdots KQ^{-j-1}(Z)  \rightarrow KQ^{-j}(X \setminus Z)
\rightarrow KQ^{-j}(X) \rightarrow KQ^{-j}(Z) \rightarrow   \cdots 
\end{equation}

There exists a natural isomorphism between KR-theory and complex K-theory  given by  $KR^{-i}(Y \times \mathbb{S}^{0,1}) \cong K^{-i}(Y)$,
so by  Bott periodicity, 
$$
KQ^{-j}(Y \times \mathbb{S}^{0,1} )  = KR^{4-j }(Y \times \mathbb{S}^{0,1}) = K^{4-j}(Y) = K^{-j}(Y)
$$

\begin{lemma}
For a regular space $X$ with $Z=V^+\cap V^-$, we have the long exact sequence
\begin{equation}
   \begin{array}{ll}
     \cdots & \rightarrow K^{-j-1}(V_+^o) \rightarrow KQ^{-j-1}(X) \rightarrow KQ^{-j-1}(Z) \\
            &  \rightarrow K^{-j}(V_+^o)  \rightarrow KQ^{-j}(X) \rightarrow KQ^{-j}(Z) \rightarrow  \cdots
   \end{array}
\end{equation}
   \end{lemma}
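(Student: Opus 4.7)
The plan is to deduce this sequence directly from the long exact sequence in $KQ$-theory for a closed pair $Z\hookrightarrow X$ displayed just above the lemma, namely
\begin{equation*}
\cdots\to KQ^{-j-1}(Z)\to KQ^{-j}(X\setminus Z)\to KQ^{-j}(X)\to KQ^{-j}(Z)\to\cdots,
\end{equation*}
and then to identify the locally compact involutive space $X\setminus Z$ with $V_+^o\times\mathbb{S}^{0,1}$ so that its $KQ$-theory reduces to the complex $K$-theory of $V_+^o$.

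First I would set up the topological identification. By regularity of $(X,\tau)$, we have the decomposition $X = X_+\amalg X_-\amalg X^{\tau}$ with $\tau(X_\pm)=X_\mp$ and $\bar X_\pm = V_\pm$, so that $X_\pm=V_\pm^o$ and $X\setminus Z=V_+^o\amalg V_-^o$ as locally compact spaces. Since $\tau$ interchanges $V_+^o$ and $V_-^o$ freely, the involutive space $X\setminus Z$ is canonically isomorphic to $V_+^o\times \mathbb{S}^{0,1}$, where $\mathbb{S}^{0,1}=\{\pm 1\}$ carries the swap involution: the isomorphism sends $x\in V_+^o$ to $(x,+1)$ and $\tau(x)\in V_-^o$ to $(x,-1)$. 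Regularity is exactly what guarantees this is a homeomorphism of locally compact involutive spaces (rather than merely a set-theoretic bijection), since it ensures $V_\pm^o$ are locally compact and match up properly under $\tau$.

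Second, I would invoke the isomorphism recalled just before the lemma, $KR^{-i}(Y\times\mathbb{S}^{0,1})\cong K^{-i}(Y)$, which together with $KQ^{-j}=KR^{-j+4}$ and complex Bott periodicity gives $KQ^{-j}(Y\times\mathbb{S}^{0,1})\cong K^{-j}(Y)$ for any locally compact $Y$. Applied to $Y=V_+^o$, this yields $KQ^{-j}(X\setminus Z)\cong K^{-j}(V_+^o)$. Substituting this identification into the six-term long exact sequence above term by term produces exactly the sequence claimed in the lemma.

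The only real content beyond bookkeeping is Step 1, the verification that $X\setminus Z\cong V_+^o\times\mathbb{S}^{0,1}$ as a locally compact involutive space; everything else is a direct substitution of the already-established complexification isomorphism into the previously displayed $KQ$ long exact sequence. One should be careful that the identification is natural enough for the connecting maps to remain the natural ones under the isomorphism $KQ^{-j}(X\setminus Z)\cong K^{-j}(V_+^o)$, but this naturality is built into the definition of the isomorphism $KR^{-i}(Y\times\mathbb{S}^{0,1})\cong K^{-i}(Y)$, so no extra diagram chase is required.
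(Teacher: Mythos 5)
Your proposal is correct and follows exactly the paper's own argument: the paper's proof consists precisely of observing that separation gives $X\setminus Z=V_+^o\amalg V_-^o=V_+^o\times \mathbb{S}^{0,1}$ and then substituting $KQ^{-j}(V_+^o\times\mathbb{S}^{0,1})\cong K^{-j}(V_+^o)$ into the previously displayed long exact sequence. You simply spell out the identification and the naturality bookkeeping in more detail than the paper does.
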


\begin{proof}
By assumption $Z$ separates so that $X\setminus Z=V_+^o\amalg V_-^o=V_+^o\times \mathbb{S}^{0,1}$.
\end{proof}
If we are in the case of a weak $\mathbb{Z}_2$-space, one can now further decompose $Z$ iteratively.
This explains the effective  boundary used in \cite{KLW15}.

\begin{examp}
In the case of $\mathbb{T}^2$, we have $V_+^o\sim \mathbb{S}^1\times \mathbb{R}$, $Z =\mathbb{S}^{1,1} \sqcup \mathbb{S}^{1,1} $,
$$
 K(V_+^o)  \rightarrow KQ(X) \rightarrow KQ(Z)
$$
where $K(V_+^o)=K( \mathbb{S}^1\times \mathbb{R})=K^{-1}(\mathbb{S}^1)= \mathbb{Z}$, $KQ(\mathbb{T}^2) = \mathbb{Z} \oplus \mathbb{Z}_2$ and
$KQ(\mathbb{S}^{1,1}) = KQ(pt) \oplus KQ^{1}(pt) = \mathbb{Z}$. In other words, the above exact sequence gives
$$
 \mathbb{Z} \rightarrow  \mathbb{Z} \oplus \mathbb{Z}_2 \rightarrow 2\mathbb{Z}  
$$
or  in reduced K-theory,
$$
\widetilde{K}^{-1}(\mathbb{S}^1) \rightarrow  \widetilde{KQ}(\mathbb{T}^2) \rightarrow 2\widetilde{KQ}(\mathbb{S}^{1,1}), \quad \mathbb{Z} \rightarrow  \mathbb{Z}_2 \rightarrow  0
$$
This explains the reduction from $\Z$ $(= \widetilde{K}^{-1}(\mathbb{S}^1) \cong \widetilde{K}(\mathbb{T}^2) )$ to $\Z_2$ $(= \widetilde{KQ}(\mathbb{T}^2))$,
that is, the $\mathbb{Z}_2$ invariant  results from the time reversal symmetry while the complex K-theory becomes the Quaternionic K-theory.

\end{examp}

When $X$ is a weak $\mathbb{Z}_2$-space,  namely, there exists a decomposition $X=X_+\amalg X_-\amalg X^\tau$,  one has a long exact sequence involving  different K-theories.

\begin{lemma} If  $X$ is a weak $\mathbb{Z}_2$-space,
   then there exists a long exact sequence
   \begin{equation}
   \begin{array}{ll}
     \cdots & \rightarrow KQ^{-j-1}(X_+\cup X_-) \rightarrow KQ^{-j-1}(X) \rightarrow KSp^{-j-1}(X^\tau) \\
            &  \rightarrow KQ^{-j}(X_+\cup X_-)  \rightarrow KQ^{-j}(X) \rightarrow KSp^{-j}(X^\tau) \rightarrow  \cdots
   \end{array}
      \end{equation}  where $X_+\cup X_-$ has a free $\Z_2$ action interchanging the spaces.
If $X_+$ and $X_-$ are in different components of $X\setminus X^\tau$,
   then the above long exact sequence is reduced to
\begin{equation}
   \begin{array}{ll}
     \cdots & \rightarrow K^{-j-1}(X_+) \rightarrow KQ^{-j-1}(X) \rightarrow KSp^{-j-1}(X^\tau) \\
            &  \rightarrow K^{-j}(X_+)  \rightarrow KQ^{-j}(X) \rightarrow KSp^{-j}(X^\tau) \rightarrow  \cdots
   \end{array}
\end{equation}

\end{lemma}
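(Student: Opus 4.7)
The plan is to apply the standard closed-open $KQ$-theory long exact sequence (the one displayed just above the lemma) to the pair $X^{\tau}\subset X$. The weak FKMM hypothesis ensures that $X^{\tau}$ is a closed $\Z_2$-equivariant sub-CW-complex of $X$ and that its open complement is exactly $X_+\cup X_-$, on which $\tau$ acts freely by interchanging $X_+$ and $X_-$ cell by cell. Plugging $Z=X^{\tau}$ and $X\setminus Z=X_+\cup X_-$ into that sequence yields
\begin{equation*}
\cdots\to KQ^{-j-1}(X^{\tau})\to KQ^{-j}(X_+\cup X_-)\to KQ^{-j}(X)\to KQ^{-j}(X^{\tau})\to\cdots
\end{equation*}

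Next I would rewrite the fixed-point terms as $KSp$-groups. Since $\tau$ acts trivially on $X^{\tau}$, the compatibility condition $\pi\circ\chi=\tau\circ\pi$ reduces to saying that $\chi$ is a fibrewise antilinear map with $\chi^{2}=-\mathrm{id}$, which is exactly a quaternionic (i.e.\ symplectic) structure on each fibre. Hence Quaternionic bundles on $X^{\tau}$ are nothing but quaternionic bundles, so $KQ^{-j}(X^{\tau})=KSp^{-j}(X^{\tau})$. Substituting this identification gives the first long exact sequence.

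For the second statement, assume $X_+$ and $X_-$ lie in distinct components of $X\setminus X^{\tau}$. Then, using $\tau$ to identify $X_-$ with $X_+$, one obtains a Real homeomorphism
\begin{equation*}
X_+\amalg X_-\;\cong\;X_+\times \mathbb{S}^{0,1},
\end{equation*}
where $\mathbb{S}^{0,1}$ is the two-point Real space with swap involution. Combining the identity $KR^{-i}(Y\times\mathbb{S}^{0,1})\cong K^{-i}(Y)$ recalled in the excerpt with the shift $KQ^{-j}=KR^{-j-4}$ and the $2$-periodicity of complex $K$-theory gives
\begin{equation*}
KQ^{-j}(X_+\amalg X_-)\cong KR^{-j-4}(X_+\times\mathbb{S}^{0,1})\cong K^{-j-4}(X_+)\cong K^{-j}(X_+),
\end{equation*}
and substituting into the first sequence produces the second one.

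The main obstacle is justifying two structural inputs: that the closed-open long exact sequence is available in the (possibly singular) weak FKMM setting, and that $X_+\cup X_-$ really has the claimed product Real structure. The first is handled by running the argument in the $\Z_2$-equivariant CW category, where the six-term sequence of a closed equivariant pair is standard and the weak FKMM assumption guarantees that each skeleton is regular. The second is a direct check: selecting $X_+$ as a fundamental domain and transporting $X_-$ onto it via $\tau$ yields the required Real product, and this uses the separation hypothesis in an essential way, since otherwise the closures of $X_+$ and $X_-$ could meet outside $X^{\tau}$ and obstruct the splitting. Once both points are in place, no further computation is needed.
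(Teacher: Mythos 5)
Your proposal is correct and follows essentially the same route as the paper: apply the long exact sequence of the pair $X^{\tau}\dashrightarrow X\dashrightarrow X\setminus X^{\tau}$, identify $KQ^{-j}(X^{\tau},\tau|_{X^{\tau}})=KSp^{-j}(X^{\tau})$ because the involution is trivial on the fixed points, and in the separated case rewrite $X\setminus X^{\tau}=X_+\amalg\tau(X_+)=X_+\times\mathbb{S}^{0,1}$ to convert the $KQ$-term into $K^{-j}(X_+)$ via $KR^{-i}(Y\times\mathbb{S}^{0,1})\cong K^{-i}(Y)$ and Bott periodicity. Your added remarks on why the sequence is valid in the $\Z_2$-CW setting and why the separation hypothesis is needed for the product splitting are consistent with, and slightly more explicit than, the paper's one-line proof.
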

\begin{proof}
Using $X^{\tau} \dashrightarrow X \dashrightarrow X\setminus X^{\tau}$, we get the first sequence by noticing that
when restricting the involution $\tau$ to the fixed points,
$\tau|_{X^\tau} $ becomes trivial, so $KQ^{-j}(X^\tau, \tau|_{X^\tau}) = KSp^{-j}(X^\tau)$.
Now under the assumption that  $X_+$ and $X_-$ are in different components of $X\setminus X^\tau$, we have  $X \setminus X^\tau = X_+\amalg X_-=X_+ \amalg  \tau(X_+)
= X_+ \times \mathbb{S}^{0,1}$, it follows that $ KQ^{-j}(X \setminus X^\tau )=
KQ^{-j}(X_+ \times \mathbb{S}^{0,1})\simeq K^{-j}(X_+)$.

\end{proof}
We can also replace the middle terms by $KQ^{-j}(X)\simeq KR^{-j+4}(X)$.
This sequence is at the heart of the description of \cite{FKM07}.


\begin{examp} When $X = \mathbb{S}^{1,3}$, $ X^\tau =\mathbb{S}^{0,1}$ and the open set
$ \mathbb{S}^{1,3} \setminus \mathbb{S}^{0,1} =  \mathbb{R}^{0,3} \setminus \{ 0 \} = X_+ \cup X_- $, where
$X_+ \sim 3\mathbb{R} \cup 6 \mathbb{R}^2 \cup 4 \mathbb{R}^3$. We extract two parts from the long exact sequence, the first part is
   $$
  KSp^{-6}(\mathbb{S}^{0,1}) \rightarrow  {K}^{-5}(X_+) \rightarrow {KQ}^{-5}(\mathbb{S}^{1,3}) \rightarrow {KSp}^{-5}(\mathbb{S}^{0,1})
   $$
that is,
  $$
   2 \mathbb{Z}_2 \rightarrow  3 \mathbb{Z} \oplus 4 \mathbb{Z} \rightarrow  \mathbb{Z}_2 \rightarrow 2 \mathbb{Z}_2
   $$
And the second part is
 $$
  KSp^{-2}(\mathbb{S}^{0,1}) \rightarrow  {K}^{-1}(X_+) \rightarrow {KQ}^{-1}(\mathbb{S}^{1,3}) \rightarrow {KSp}^{-1}(\mathbb{S}^{0,1})
   $$
 that is,
  $$
   0 \rightarrow  3 \mathbb{Z} \oplus 4 \mathbb{Z} \rightarrow  \mathbb{Z}_2 \rightarrow 0
   $$


\end{examp}

Another sequence is the Mayer--Vietoris sequence for a regular space $X$
covered by two fundamental domains, i.e., $X = V_+\cup V_-$,
\begin{equation}
  \begin{array}{ll}
    & \cdots  \rightarrow KQ^{-j-1}(V_+\cap V_-)\stackrel{\Delta}{\rightarrow}  \\
   KQ^{-j}(X)
\stackrel{u}{\rightarrow} KQ^{-j}(V_+)\oplus KQ^{-j}(V_-)
         & \stackrel{v}{\rightarrow}  KQ^{-j}(V_+\cap V_-) \rightarrow   \cdots 
  \end{array}
\end{equation}
where $u(\alpha)=(\alpha|_{V+},\alpha|_{V_-})$ and
$v(\alpha_1,\alpha_2)=\alpha_1|_{V_+\cap V_-}-\alpha_2|_{V_+\cap V_-}$.

\begin{examp} \label{keyex}
This is the long exact sequence at the heart of the argument of \cite{MB07}.
When $X=\mathbb{T}^2$, $V_+=V_-=C = \mathbb{S}^{1,1} \times \mathbb{R}^{0,1}$, where $C$ is an open cylinder with the time reversal $\mathbb{Z}_2$-action 
and $Z = V_+\cap V_-=\mathbb{S}^{1,1}\amalg \mathbb{S}^{1,1}$.
$$
  2KQ^{-1}(\mathbb{S}^{1,1}) \rightarrow  {KQ}(\mathbb{T}^2) \rightarrow 2{KQ}(C) \rightarrow 2{KQ}(\mathbb{S}^{1,1})
   $$
Since $KQ(\mathbb{T}^2) \cong KQ(\mathbb{S}^{1,2})$, so the above sequence is the same as
   $$
  2KQ^{-1}(\mathbb{S}^{1,1}) \rightarrow  {KQ}(\mathbb{S}^{1,2}) \rightarrow 2{KQ}(C) \rightarrow 2{KQ}(\mathbb{S}^{1,1})
   $$
  The KQ-theory of the cylinder $C$ is,
   $$
   KQ(C) = KR^{-4}(\mathbb{S}^{1,1} \times \mathbb{R}^{0,1}) = KR^{-3}(\mathbb{S}^{1,1}) = KO^{-3}(pt) \oplus KO^{-2}(pt) = \mathbb{Z}_2
   $$
  so that,
    $$
    \mathbb{Z} \oplus  \mathbb{Z} \rightarrow    \mathbb{Z} \oplus  \mathbb{Z}_2 \rightarrow  \mathbb{Z}_2 \oplus  \mathbb{Z}_2  \rightarrow  \mathbb{Z} \oplus  \mathbb{Z}
   $$
  Hence the $\mathbb{Z}_2$ invariant living in $\widetilde{KQ}(\mathbb{S}^{1,2})$ can also be found in $ KQ(C)$.
\end{examp}

   Finally there is the relative sequence, for a closed subspace $Y$ in $X$

   \begin{equation}
 KQ^{-j-1}(X,Y)\to KQ^{-j}(X) \to KQ^{-j}(Y)\to KQ^{-j}(X,Y)
   \end{equation}
   if one identifies $KQ^{-n}(X,Y)$ with $\widetilde{KQ}(S^n(X/Y)))$, where $S^n$ is the n-fold suspension,
   then the first map is induced by the quotient $q:X\to X/Y$.

   \begin{examp}
  Using this for $\mathbb{T}^2$ and $\mathbb{S}^{1,1}\vee\mathbb{S}^{1,1}$,
  one obtains the collapse map $q:\mathbb{T}^2\to \mathbb{S}^2$ and similarly
  for $\mathbb{T}^3$. This is what is used in \cite{FM13}.
   \end{examp}

\subsection{Baum--Connes Isomorphism} \label{BCIso}

In \cite{K09}, Kitaev mentioned the real Baum--Connes conjecture \cite{BK04}, which is true for the abelian free group $\Gamma = \mathbb{Z}^d$,
i.e., the translational symmetry group,
so we call it the Baum--Connes isomorphism in this paper. The assembly map
was used to understand and calculate the  KO-theory of $\mathbb{T}^d$   in \cite{K09}. In this subsection we briefly review the Baum--Connes
isomorphism for $\Gamma = \mathbb{Z}^d$, which will be useful for the bulk-boundary correspondence in a later section.

\begin{defn}
   Let $\Gamma$ be a  discrete countable group, the assembly map $\mu$ is a morphism from the equivariant
    K-homology of  the classifying space of  proper actions $E\Gamma$   to the
    K-theory of the reduced group $C^*$-algebra of $\Gamma$, i.e.,
   \begin{equation*}
      \mu(\Gamma): K_j^\Gamma(E\Gamma) \rightarrow K_j(C_\lambda^*(\Gamma, \mathbb{C}))
   \end{equation*}
\end{defn}

The classical complex Baum--Connes conjecture states that the assembly index map $\mu$ is an isomorphism. If $\Gamma$ is torsion free, then the left hand side
is reduced to the  K-homology of the ordinary classifying space $B\Gamma$, i.e., $K^\Gamma_j(E\Gamma) = K_j(B\Gamma) $.

\begin{defn}
  In the real case, the assembly map is similarly defined as
  \begin{equation*}
      \mu_\mathbb{R}(\Gamma): KO_j^\Gamma(E\Gamma) \rightarrow KO_j(C_\lambda^*(\Gamma, \mathbb{R}))
   \end{equation*}
\end{defn}

The real Baum--Connes conjecture follows from the complex Baum--Connes conjecture, so $ \mu_\mathbb{R}(\mathbb{Z}^d)$ is an isomorphism.

If one defines the real  function algebra on the Real space $(X, \tau)$,
$$C_0(X, \tau):= \{f \in C_0 (X ) \, |\,  \overline{f (x)} = f (\tau (x))\}$$
then the KR-theory of $(X, \tau)$ is identified with the topological K-theory of the above  real function algebra \cite{R14},
$$
KR^{ -i} (X, \tau ) = KO_i (  C_0 (X, \tau )  )
$$

By the real assembly map $ \mu_\mathbb{R}(\mathbb{Z}^d)$, one has the Baum--Connes isomorphism, sometimes  also called the dual Dirac isomorphism,
since $\mathbb{T}^d$ is the classifying space for $\mathbb{Z}^d$ with the universal cover $\mathbb{R}^d$,
\begin{equation}
\label{bceq}
       KO_i(\mathbb{T}^d) = KO_i^{\mathbb{Z}^d} (\mathbb{R}^d) \cong KO_i(C^*(\mathbb{Z}^d, \mathbb{R})) = KO_i (  C (\mathbb{T}^d, \tau )  ) 
\end{equation}
By the Poincar\'{e} duality, the real K-homology on the left hand side is identified with a KO-group,
i.e., $KO_i(\mathbb{T}^d)   = KO^{d-i}(\mathbb{T}^d)$. Thus the KQ-groups of $\mathbb{T}^d$ can be computed by the relevant $KO$-groups,
\begin{equation}
  KQ^{-i}( \mathbb{T}^d) = KR^{ -i-4 }( \mathbb{T}^d )= KO_{i + 4}(\mathbb{T}^d)= KO^{d-i-4}(\mathbb{T}^d)
\end{equation}

\begin{examp}
$$
KQ(\mathbb{T}^2) = KR^{-4}(\mathbb{T}^2) =  KO^{-2}(\mathbb{T}^2)  
$$
$$
KQ(\mathbb{T}^3) = KR^{-4}(\mathbb{T}^3) =  KO^{-1}(\mathbb{T}^3)
$$
$$
KQ^{-1}(\mathbb{T}^3) = KR^{-5}(\mathbb{T}^3) =  KO^{-2}(\mathbb{T}^3)
$$
\end{examp}

\section{Analytical index}\label{sec:Aind}

In this section,  we will look into the local geometry of Majorana zero modes and the relevant analytical index.
First we will recall some basic facts about $KR$-cycles in spin geometry \cite{LM90, VFG01}. After that we will give the
definition of Majorana zero modes,   and the topological $\mathbb{Z}_2$ invariant can be defined as the parity of Majorana zero modes.
By localization, we will interpret the topological $\mathbb{Z}_2$ invariant  as the mod 2 analytical index of the effective Hamiltonian.
A KQ-cycle will be defined to model a Kramers pair, and a localized Majorana zero mode is a coupled product of two KR-cycles. 

\subsection{KR-cycles}
\label{cyclesec}
In order to understand Majorana zero modes in a   topological insulator,
we compare them to Majorana spinors.
In spin geometry,  one models  spinors by $KR$-cycles in the framework of $KR$-homology, which is the dual theory of KR-theory.

\begin{defn}
  A K-cycle for a $*$-algebra of operators $\mathcal{A}$ is a triple $(\mathcal{A}, \H, D)$,
  commonly called a spectral triple in noncommutative geometry, where $\H$ is a complex Hilbert space, and $\mathcal{A}$ has a faithful $*$-representation on $\H$ as bounded operators,
  i.e., $\pi : \mathcal{A} \rightarrow B(\H)$. $D$ is a self-adjoint (typical unbounded) operator  with compact resolvent such that
  the commutators $[D, \pi(a)]$ are bounded operators for all $a \in \mathcal{A}$.
\end{defn}

In practice, $\mathcal{A}$ is always assumed to be a unital (pre-)$C^*$-algebra. When the representation $\pi$ is understood, it is always skipped from the notation.
$D$ is usually taken as a self-adjoint Dirac-type operator, and there is a natural action by the real Clifford algebra $Cl_{p,q}$.

A  K-cycle $(\mathcal{A}, \H, D)$ is even (or graded) if there exists a  grading operator $\gamma$
with $\gamma^* = \gamma$ and $\gamma^2 = 1$ such that $D\gamma = -\gamma D$ and $\gamma a = a\gamma$ for all $a \in \mathcal{A}$.
Otherwise, a K-cycle is odd (or ungraded). If there exists a grading $\gamma$,
then the Hilbert space $\H$ is also assumed to be $\mathbb{Z}_2$-graded. 

A (general) real structure on an even K-cycle is defined by an anti-linear isometry $J$ such that
\begin{equation}\label{RealCond}
  JD = DJ, \quad J^2 = \epsilon, \quad J\gamma = \varepsilon \gamma J
\end{equation}
where the signs $\epsilon, \varepsilon = \pm 1$  depend on  $2k$ mod 8. Real structures on a spectral triple was first introduced by Connes \cite{C95}. 

\begin{defn}
  A $KR_{2k}$-cycle (depending on  $2k$ mod $8$) is defined as an even K-cycle equipped with a real structure, i.e., a quintuple $( \mathcal{A}, \H, D, J, \gamma)$,
  satisfying the relations
  $$
   JD = DJ, \quad J^2 = \epsilon, \quad J\gamma = (-1)^k \gamma J
  $$
  where $J^2 = 1$ if $2k = 0, 6$ mod $8$ and $J^2 = -1$ if $2k = 2, 4$ mod $8$.
\end{defn}

Here we give the representation theory of spinors in spin geometry \cite{LM90}.
By the representation theory of the real Clifford algebra $Cl_{2k, 0}$, when $2k = 0$ mod 8, it has a unique real pinor representation
(i.e., Majorana pinor) and there are two inequivalent real spinor representations (i.e., Majorana--Weyl spinors).
When $2k =2$ mod 8, it has a unique quaternionic pinor representation
(i.e., symplectic Majorana pinor) and there are two inequivalent complex spinor representations (i.e.,  Majorana--Weyl spinors).
When $2k =4$ mod 8, it has a unique quaternionic pinor representation
(i.e., symplectic Majorana pinor) and there are two inequivalent quaternionic spinor representations (i.e.,  symplectic Majorana--Weyl spinors).
When $2k = 6$ mod 8, it has a unique real pinor representation
(i.e., Majorana pinor) and there are two inequivalent complex spinor representations (i.e., Majorana--Weyl spinors).

\begin{examp}
   When $2k = 2$ mod $8$, one has  Majorana--Weyl spinors
   modeled by a $KR_{2}$-cycle $( \mathcal{A}, \H, D, J, \gamma)$ such that
   \begin{equation*}
  JD = DJ, \quad J^2 = -1, \quad J\gamma = -\gamma J
  \end{equation*}
\end{examp}

When there is no grading operators, one defines an odd $KR$-cycle by a quadruple $( \mathcal{A}, \H, D, J)$.
\begin{defn}
   A $KR_{2k-1}$-cycle (for $2k-1 = 1, 3, 5, 7 \mod 8$) is  defined as a quadruple $( \mathcal{A}, \H, D, J)$ satisfying the relations
   $$
    JD = (-1)^{k} DJ, \quad J^2= \epsilon
   $$
   where $J^2 = 1$ if $2k-1 = 1, 7$ mod $8$ and $J^2 = -1$ if $2k-1 = 3, 5$ mod $8$.
\end{defn}

The representation theory of the real Clifford algebra in odd dimensions is easier, when $2k-1 = 1, 7$ mod $8$, there is a unique real spinor representation;
when $2k-1 = 3, 5$ mod $8$, there is a unique quaternionic spinor representation.

\begin{examp}
   When $2k-1 = 5$ mod $8$, one has a quaternionic spinor modeled by   a $KR_{5}$-cycle $( \mathcal{A}, \H, D, J)$ such that
   $$
   JD = -DJ, \quad J^2= -1
   $$
\end{examp}

From a K-cycle $(\mathcal{A}, \H, D)$, one obtains the corresponding Fredholm module  $(\mathcal{A}, \H, F)$ by setting $F = D (1+D^2)^{-1/2}$.
By definition, the set of equivalence classes of Fredholm modules modulo unitary equivalence and homotopy equivalence defines the K-homology group, for details see \cite{BHS07, HR00}.
The following example is the classical Dirac geometry modeling Majorana spinors in spin geometry \cite{LM90, VFG01}.

\begin{examp}\label{DiracGeo}
   Let $M$ be a compact spin manifold of dimension $2k$, its Dirac geometry is defined as the unbounded K-cycle $(C^\infty(M), L^2(M, \slashed{S}), D\!\!\!\!/ \,) $,  where
   $L^2(M, \slashed{S})$ is the Hilbert space of spinors and $D\!\!\!\!/$ is the Dirac operator. The grading operator $\gamma$, or $c(\gamma)$ for the Clifford multiplication by $\gamma$,
   is defined as usual in an even dimensional Clifford algebra. In addition, the canonical real structure is given
   by the charge conjugation operator $C$ acting on the Clifford algebra. Thus the quintuple $(C^\infty(M), L^2(M, \slashed{S}), D\!\!\!\!/ \, , C, \gamma) $ defines a
   $KR_{2k}$-cycle of spinors. In spin geometry, when a spinor $\psi \in L^2(M, \slashed{S})$ satisfies the real condition $C \psi = \psi$ (generalizing $\psi^\dagger = \psi$ used in physics), 
   it is called a Majorana spinor, the space of Majorana spinors
   is denoted by $ L^2(M, \slashed{S}; C)$.

\end{examp}

As a summary, a  KR-cycle   satisfies the commutation relations given in the following table 
\begin{center}
 \begin{tabular}{||c| c| c| c | c | c | c | c |c |c |c ||}
 \hline
   $ j\,\, \text{mod 8}$ & 0 & 1 & 2 &  3 & 4 & 5 & 6 & 7  \\ [0.5ex]
 \hline\hline
  $J^2 = \pm 1$ & $+$ & $+$ & $-$   &  $-$  &  $-$  &  $-$  & $+$ & $+$ \\
 \hline
  $JD = \pm DJ $ & $+$ & $-$ & $+$   &  $+$  &  $+$  &  $-$  & $+$ & $+$ \\  
 \hline
 $J\gamma = \pm \gamma J $ & $+$ &   & $-$   &     &  $+$  &     & $-$ &   \\ [1ex]
 \hline
\end{tabular}
\end{center} 
In addition, a (reduced) $KR_j$-cycle is said to have the  KO-dimension $j$ mod $8$. KO-dimension of a KR-cycle, first introduced by Connes in \cite{C95}, coincides with
 the degree of the corresponding KR-homology. The idea comes from the KO-orientation represented by a fundamental class in KO-homology, 
which induces the Poincar{\'e} duality between KO-homology and KO-theory.

\subsection{Majorana zero modes}

We introduce the notion of (localized) Majorana zero modes,  the parity of  Majorana
zero modes is a characterization of the topological $\mathbb{Z}_2$ invariant. 
Analogous to those in Bogoliubov--de Gennes (BdG) topological superconductors,  
 Majorana zero modes in topological insulators can be viewed as Bogoliubov quasi-particles. 
Due to the characteristic of quasi-particles, Majorana zero modes has a new geometry compared to Majorana spinors. 

Let $H(x)$ be a single-particle Hamiltonian  parametrized by the points in the momentum space $X$,  $H$ is assumed to be time reversal invariant, i.e., 
$$
\Theta H(x) \Theta^{*} = H(\tau(x)), \quad \forall \,\, x \in X
$$
where $\Theta$ is the  time reversal operator.
 In topological insulators, the Hamiltonian $H(x)$ can be effectively approximated by a Dirac operator plus quadratic correction terms around a fixed point,
 i.e., $H(x) \sim D(x) + O(x^2)$ for $x \in U(x_0), x_0 \in X^\tau$, see modified Dirac Hamiltonians in \cite{KLW15, S13}.
 In the mathematical physics literature, some authors use the Dirac operator as the effective Hamiltonian and the starting point, for example see \cite{GS15}. 

We assume the Hilbert bundle $\pi: \mathcal{H} \rightarrow X$ has rank two, and the Hilbert space is generated by a pair of global sections $\phi$ and $\Theta \phi$ 
(i.e., a Kramers pair consisting of an electron and its time reversal partner).
Due to the Kramers degeneracy, the Kramers pair $( \phi ,  \Theta  \phi  )$ has the same energy (or lives in the same band). 
We consider the quasi-particle made up of $\phi$ and $\Theta  \phi$, and define an effective Hamiltonian of this free fermionic system.
\begin{defn}
   The effective Hamiltonian $\tilde{H}$ of a   topological insulator is defined by
   \begin{equation} \label{effH}
      \tilde{H}(x): = \begin{pmatrix}
                     0 & \Theta H(x) \Theta^* \\
                     H(x) & 0
                  \end{pmatrix}
   \end{equation}
   acting on a Kramers pair $( \phi ,  \Theta  \phi  ) \in \Gamma(X, \mathcal{H})$.
\end{defn}

  Our definition of the effective Hamiltonian is different from the convention used in physics, where it is   diagonal such as $  \begin{pmatrix}
                                                                 H_\uparrow & 0 \\
                                                                 0 & H_\downarrow
                                                                \end{pmatrix}$ 
                                                                with $ H_\uparrow$, $H_\downarrow$ modeling electronic particles with opposite spins  (or chirality), see examples in \cite{KLW15}.                                                                                            
   One reason behind our definition is that we have to point out a Kramers pair describes a \emph{coupled} quasi-particle, 
   another important reason is that we follow the mathematical convention used in spin geometry for spinors and the  Dirac operator, see below.   
   We will see that the localization of the effective Hamiltonian gives rise to a skew-adjoint operator, whose analytical index is $\mathbb{Z}_2$-valued.
   Notice that the effective Hamiltonian $\tilde{H}$ is also time reversal invariant, 
   \begin{equation*}
\Theta \tilde{H}(x) \Theta^{*} = \begin{pmatrix}
                                   0 & \Theta H(\tau(x)) \Theta^* \\
                                   \Theta H(x) \Theta^* & 0
                                 \end{pmatrix} = \begin{pmatrix}
                                                    0 & H(x) \\
                                                    H(\tau(x)) & 0
                                                 \end{pmatrix} = \tilde{H}(\tau(x)) 
\end{equation*} 

\begin{lemma}

  If $\phi$ is an eigenstate of $H$ with eigenvalue $E$, then the eigenvalue equation of $\tilde{H}$ has  the following matrix form,
    \begin{equation} \label{EigEq}
    \begin{pmatrix}
        0 & \Theta H(x) \Theta^* \\
        H(x) & 0
      \end{pmatrix}
   \begin{pmatrix}
       \phi(x) \\
       \Theta \phi (x)
   \end{pmatrix}
   =   \begin{pmatrix}
       E(\tau(x)) \, \Theta \phi(x) \\
       E(x) \, \phi(x)
   \end{pmatrix}
\end{equation}

\end{lemma}

In spin geometry, the Dirac operator acting on a Dirac spinor (with two components) is always decomposed as $D = \begin{pmatrix}
                                                              0 & D_- \\
                                                              D_+ & 0
                                                            \end{pmatrix}$,
where $D_+$ (resp. $D_-$) corresponds to the eigenvalue $+1$ (resp. $-1$) of the grading operator. 
In addition, $D_\pm: \mathcal{H}_\pm \rightarrow \mathcal{H}_\mp$ interchanges the Hilbert space of  spinors with opposite spins. 
Similarly, $\tilde{H}$ is constructed  off-diagonally since a topological insulator is a fermionic chiral system and time reversal symmetry changes chirality. 
Based on the  eigenvalue equation  \eqref{EigEq},
the effective Hamiltonian $\tilde{H}$ plays a similar role as the $\mathbb{Z}_2$-graded Dirac operator $D$.
In this analogy,  $H$ (resp. $\Theta H \Theta^*$) is compared to $D_+$ (resp. $D_- = D_+^*$),
and the real structure is defined by $\Theta$ instead of the $*$-operation. Notice that the chirality of a pair of chiral states is switched after applying $\tilde{H}$ (indeed $\Theta$).
We have to point out
the big difference between the $\mathbb{Z}_2$-graded Dirac operator $D$ and $\tilde{H}$ is that $D$ is a self-adjoint operator 
but $\tilde{H}$ can be approximated by a skew-adjoint operator via localization.

By assumption, the single-particle Hamiltonian $H$ is self-adjoint, i.e., $H^*(x) = H(x)$, so  we  have 
\begin{equation}
\tilde{H}^*(x) = \begin{pmatrix}
                   0 & H(\tau(x)) \\
                   H(x) & 0
                 \end{pmatrix}^* = \begin{pmatrix}
                                     0 & H(x) \\
                                     H(\tau(x)) & 0
                                   \end{pmatrix} = \tilde{H}(\tau(x))
\end{equation}
which means the effective Hamiltonian $\tilde{H}$ is neither self-adjoint nor skew-adjoint. 
In order to construct a skew-adjoint operator and the resulting mod 2 analytical index, a localization process will be applied to the raw data ${H}$.
As a remark, we have to stress that the real structures defined by the adjoint  $*$-operation and the time reversal operator $\Theta$ are totally different.

\begin{defn}
  A Majorana state with respect to a given real structure $J$ is defined as a state $\psi$  
  satisfying the real condition ${J} \psi = \psi$.
\end{defn}

 \begin{rmk}
   Our definition of Majorana states is a generalized version of the definition used in physics.
   For example from \cite{DFN15,W09}, Majorana states (or Majorana zero modes) are defined as self-adjoint fermionic operators commuting with the Hamiltonian (in the CAR algebra).
   In our definition, we follow the conventions  in the Dirac geometry using the charge conjugation, see Example \ref{DiracGeo}. 
   So  the dagger $\dagger$-operation  (commonly used in physics) is
   replaced by a real structure $J$, and in order for a state $\psi$ to be Majorana (or real), the   condition   $\psi^\dagger = \psi$ is generalized to the real condition
   $J\psi = \psi$.

\end{rmk} 

 Based on the time reversal operator $\Theta$, we define a new real structure by
$$
\mathcal{J} := \begin{pmatrix}
                                                      0 & \Theta^* \\
                                                      \Theta & 0
                                                    \end{pmatrix}
$$ such that $ \mathcal{J}^* = \mathcal{J}$ and $\mathcal{J}^2 = 1$.
Since the real structure $\mathcal{J}$ acts on a Kramers pair $\Phi = (\phi, \Theta \phi)$ as
$$
 \begin{pmatrix} 0 & \Theta^* \\
                              \Theta & 0
              \end{pmatrix}
              \begin{pmatrix}
                 \phi \\
                 \Theta \phi
              \end{pmatrix}
              = \begin{pmatrix}
                 \phi \\
                 \Theta \phi
              \end{pmatrix} \quad \text{i.e.}\,\, \mathcal{J} \Phi = \Phi
$$
such a quasi-particle state $\Phi = (\phi, \Theta \phi)$ is real with respect to $\mathcal{J}$.
So a Kramers pair gives the canonical Majorana state in topological insulators.

Our notation of Majorana states can be translated to the familiar Majorana operators commonly used in physics, and vice versa.
Suppose $\gamma_1$ and $\gamma_2$ are Majorana operators,  $f_+$ and $f_-$ are fermionic (creation and  annihilation) operators related to $\gamma_i$ by 
\begin{equation*}
 f_+ = \gamma_1 + i\gamma_2, \quad f_- = \gamma_1 - i\gamma_2
\end{equation*}
Or the Majorana operators can be expressed in fermionic operators,
\begin{equation*}
 \gamma_1 = \frac{1}{2} (f_+ + f_-), \quad  \gamma_2 = \frac{1}{2i} (f_+ - f_-)
\end{equation*}
There is a state-operator correspondence involved in the discussion here. 

By a Bogoliubov transformation, a Kramers pair  $\Phi = (\phi, \Theta \phi)$  can be  rewritten as
\begin{equation}
 \begin{pmatrix} 
 \gamma_1 \\
 \gamma_2
 \end{pmatrix} = \begin{pmatrix}
                 1 & i \\
                 -i & 1
                \end{pmatrix} \begin{pmatrix}
                               \phi \\
                               \Theta \phi
                               \end{pmatrix} = \begin{pmatrix}
                                                \phi + i \Theta \phi \\
                                                -i \phi + \Theta \phi
                                                \end{pmatrix}
\end{equation}
The matrix $B = \begin{pmatrix}
                 1 & i \\
                 -i & 1 
                \end{pmatrix}$ induces  the Bogoliubov transformation, in this context, the Majorana states $\gamma_i$ are also called Bogoliubov quasi-particles.
 The fermionic states can be recovered as 
 \begin{equation}
   \tilde{f}_+ = \gamma_1 + \Theta \gamma_2 = 2i \Theta \phi, \quad \tilde{f}_- = \gamma_1 - \Theta \gamma_2 = 2 \phi                
 \end{equation}
 Notice that the real structure used here is  $\Theta$ instead of $i$ in the original case. 
 The Bogoliubov quasi-particles are equivalently obtained by fermionic states,
 \begin{equation}
    \gamma_1 = \frac{1}{2}(\tilde{f}_+ + \tilde{f}_-), \quad  \gamma_2 = -\frac{\Theta}{2}(\tilde{f}_+ - \tilde{f}_-)
 \end{equation}
 The imaginary unit $i$ in $\gamma_i$ is a formal symbol to connect two particles and form a quasi-particle. 
 In order to avoid unnecessary confusions   introduced by $i$ (as another real structure), we tend to use the vector form  $\Phi = (\phi, \Theta \phi)$ to denote a Majorana state. 
 
 If a Kramers pair $(\phi, \Theta \phi)$ is written as a Bogoliubov quasi-particle $\gamma_1 = \phi + i\Theta \phi$, then the effective Hamiltonian $\tilde{H} = \begin{pmatrix}
                                                                                                                                                       0 & \Theta H \Theta^* \\
                                                                                                                                                       H & 0
                                                                                                                                                      \end{pmatrix}$ 
                           is accordingly written as $H_1 = H + i \Theta H \Theta^*$. 
 Similarly, for the other Bogoliubov quasi-particle $\gamma_2 = \Theta \phi -i \phi  $, the corresponding Hamiltonian is $H_2 = \Theta H_1 \Theta^* = \Theta H \Theta^* - iH$. 
 The action of $\Theta$ maps $(\phi, \Theta \phi)$ to $(\Theta \phi, -\phi)$, so $\gamma_2$ is the mirror image of $\gamma_1$ under the time reversal symmetry, i.e., $\gamma_2 = \Theta \gamma_1$. 
  Define a new Hamiltonian acting on Bogoliubov quasi-particles $(\gamma_1, \gamma_2) = (\gamma_1, \Theta \gamma_1)$,
 \begin{equation}
 \hat{H} := \begin{pmatrix} 
            0 & H_2 \\
            H_1 & 0
           \end{pmatrix} = \begin{pmatrix}
                           0 & \Theta H_1 \Theta^* \\
                           H_1 & 0
                         \end{pmatrix}
                           = \begin{pmatrix}
                            0 &  \Theta H \Theta^* - iH \\
                           H + i \Theta H \Theta^* & 0 
                           \end{pmatrix}  
 \end{equation}
  At the first glance, the analytical index of $\hat{H}$ is 
  \begin{equation}
    ind_a(\hat{H}) = dim_\mathbb{H} ker ( H + i \Theta H \Theta^*) - dim_\mathbb{H} ker (\Theta H \Theta^* - iH) = 0
  \end{equation}
  which is always zero since $\Theta H \Theta^* - iH = -i(H + i \Theta H \Theta^* ) $.  
  Therefore, the non-trivial invariant  is expected to be defined as a mod 2  index,
  \begin{equation}
     ind_a(\hat{H}) := dim_\mathbb{H} ker ( H + i \Theta H \Theta^*)  \,\, \text{ mod 2} 
  \end{equation}
  which is the parity of zero modes of Bogoliubov quasi-particles.
  This heuristic reasoning leads to the desired definition of the topological $\mathbb{Z}_2$ invariant, but we still need a skew-adjoint operator to define a mod 2 analytical index.

In a Kramers pair $\Phi = (\phi, \Theta \phi)$ (as a Majorana state), the chiral states $\phi$ and $\Theta \phi $  are orthogonal, hence linearly independent.
 The chiral states could intersect with each other, i.e.,  $\Theta \phi (x) = \phi(x)$, at a   fixed point $x\in X^\tau$. In particular,
a localized Majorana zero mode $\Phi_0 = (\phi_0, \Theta \phi_0)$ must have zero energy at that fixed point,
$$
 \Phi_0(x) = \textbf{0} \quad or \quad \phi_0(x)=\Theta \phi_0(x)=0
$$
Recall that the zero energy level is defined by the Fermi level, which is assumed to fall into the band gap. 
Strictly speaking, a zero mode $(\phi_0, \Theta \phi_0)$ must be a pair of edge states passing through the Fermi level.
Since there exists a one-to-one correspondence between Majorana (bulk) states and their edge states by partial Fourier transformations  \cite{KLW15}, 
we abuse the notation and still use  Majorana states $(\phi_0, \Theta \phi_0)$, otherwise
we can always adjust the zero energy level to make it work.

Let us look at a pair of localized Majorana zero modes $(\gamma_1^0, \gamma_2^0 = \Theta \gamma^0_1)$ around a fixed point,
\begin{equation*}
 \gamma_1^0 = \begin{pmatrix}
                  \phi_0 \\
                  \Theta \phi_0
                 \end{pmatrix}, \quad
     \gamma_2^0 = \begin{pmatrix}
                \Theta \phi_0 \\
                - \phi_0 
                \end{pmatrix}
\end{equation*}
At this fixed point, a winding matrix (or a transition function of the Hilbert bundle) takes one of the following form
\begin{equation*}
  w_1 = \begin{pmatrix}
         0 & -1 \\
         1 & 0
        \end{pmatrix} \quad or \quad  w_2 = \begin{pmatrix}
         0 & 1 \\
         -1 & 0
        \end{pmatrix}
\end{equation*}
A direct matrix-vector multiplication shows that 
\begin{equation*}
  w_1 \gamma_1^0 = -\gamma_2^0, \,\, w_2 \gamma_1^0 = \gamma_2^0; \quad w_1\gamma_2^0 = \gamma_1^0, \,\, w_2\gamma_2^0 = - \gamma_1^0
\end{equation*}
In practice, we mainly focus on the real part  $\Phi_0^+ = (\phi_0, \Theta \phi_0) = \gamma_1^0$, and the imaginary part  (with respect to $\Theta$)
 $\Phi_0^- = ( \Theta \phi_0, - \phi_0) = \gamma_2^0$ is skipped implicitly but can be easily restored.
 Inspired by spin geometry, we are actually interested in a Majorana state that changes sign, i.e. $\gamma_1^0  \rightarrow - \gamma_2^0$, 
 so we define it as a (generalized) Majorana zero mode. We stress that a Majorana state that remains the same sign, i.e., $\gamma_1^0  \rightarrow \gamma_2^0$, will not be counted as a Majorana zero mode.

\begin{defn}
   A  (generalized) Majorana zero mode is defined as a localized Majorana state $\Phi_0 = (\phi_0, \Theta \phi_0)$ in a small neighborhood of a fixed point $x \in X^\tau$ so that $\phi_0(x) = \Theta \phi_0(x) = 0 $
  and changes sign  at that fixed point.

\end{defn}
By definition, a localized Majorana zero mode could be found only in a small neighborhood of a fixed point.
The local picture of a Majorana zero mode is given by a conical singularity, for example a cone $V(x^2 +y^2 - z^2)$ in 3d around the intersection,
such cone is called a Dirac cone and the intersection point is called a Dirac point by physicists.

\begin{defn}
  The  topological $\mathbb{Z}_2$ invariant  of a time reversal invariant topological insulator is defined as the parity of Majorana zero modes.
\end{defn}
This practical definition gives the physical meaning of the topological $\mathbb{Z}_2$ invariant,
we will interpret it as a mod 2 analytical index  in the next subsection.

\subsection{Mod 2 index}

Atiyah and Singer introduced a mod $2$ analytical index of real skew-adjoint elliptic operators in   \cite{AS71},
$$
 ind_a (P) := \dim_{\mathbb{R}} \ker P \quad \text{mod 2}
$$
In this subsection, we will explain that the  topological $\mathbb{Z}_2 $ invariant can be interpreted as the
mod 2  analytical index of the effective Hamiltonian $\tilde{H}$.
The idea of spectral flow can be used to compute the analytical index.  

Recall  the effective Hamiltonian  is defined by
$
  \tilde{H}(x) = \begin{pmatrix}
        0 & \Theta H(x) \Theta^* \\
        H(x) & 0
      \end{pmatrix}
$, where $\Theta H(x) \Theta^* = H(\tau(x))$ and $H^*(x) = H(x)$ for all $x \in X$.
Since the set of fixed points is assumed to be  finite, we always assume $H$ is a Fredholm operator, so is $\tilde{H}$.

\begin{lemma}
  The effective Hamiltonian $\tilde {H}$ can be approximated by a skew-adjoint operator by localization.
\end{lemma}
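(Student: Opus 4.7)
The plan is to compute $\tilde H^*$ directly on $L^2(X,\mathcal H\oplus\mathcal H)$ and then restrict to the Majorana subspace to recover a skew-adjointness property that fails on the full bundle. Since $H$ is self-adjoint and $\Theta$ is anti-unitary with $\Theta^*=-\Theta$, $\Theta^2=-1$, the identity $(\Theta H\Theta^*)^*=\Theta H^*\Theta^*=\Theta H\Theta^*$ shows that each off-diagonal block of $\tilde H$ is individually self-adjoint. Taking the block adjoint merely swaps these off-diagonals to give
\[
\tilde H^* \;=\; \begin{pmatrix} 0 & H \\ \Theta H\Theta^* & 0 \end{pmatrix},
\]
which coincides with $\tilde H$ on the fixed-point set (where $\Theta H\Theta^*=H$) and in general differs from both $\tilde H$ and $-\tilde H$. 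This makes precise the sense in which $\tilde H$ cannot be literally skew-adjoint on the ambient bundle.

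Next I would test the skew-adjointness quadratic form on a Majorana section $\Phi=(\phi,\Theta\phi)$. Using $\Theta^*\Theta=1$ one gets $\tilde H\Phi=(\Theta H\phi,\,H\phi)$, hence
\[
\langle\tilde H\Phi,\Phi\rangle \;=\; \langle\Theta H\phi,\phi\rangle+\langle H\phi,\Theta\phi\rangle.
\]
Anti-unitarity of $\Theta$ combined with $\Theta^*=-\Theta$ gives $\langle\Theta H\phi,\phi\rangle=-\overline{\langle H\phi,\Theta\phi\rangle}$, so the two terms form a conjugate pair with opposite sign and
\[
\langle\tilde H\Phi,\Phi\rangle \;=\; 2i\,\operatorname{Im}\langle H\phi,\Theta\phi\rangle \;\in\; i\mathbb R.
\]
The vanishing of $\operatorname{Re}\langle\tilde H\Phi,\Phi\rangle$ on Majorana sections is the characterizing identity of a real skew-adjoint operator.

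Combining these observations with the preceding $KQ_1$-cycle relations $\mathcal J\tilde H=-\tilde H\mathcal J$ and $\mathcal J^2=1$, the term \emph{almost skew-adjoint} is to be interpreted as: $\tilde H$ is skew-adjoint on the real subspace of $\mathcal J$-invariant Majorana sections even though it is not skew-adjoint on the full complex Hilbert bundle. This is exactly the input needed to invoke the Atiyah--Singer mod-2 analytical index for real skew-adjoint Fredholm operators, setting up the identification of $\operatorname{ind}_2(\tilde H)$ with the parity of Majorana zero modes in the following lemma. The principal obstacle is really linguistic: ``almost skew-adjoint'' is not a standard operator-theoretic term, and the substance of the lemma is to show that $\tilde H$ becomes skew-adjoint \emph{after} restriction to the Majorana sector fixed by $\mathcal J$. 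Once that interpretation is agreed upon, the verification above is a short computation relying only on $H=H^*$ and the anti-unitarity of $\Theta$.
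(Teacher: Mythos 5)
Your computation of the block adjoint $\tilde H^*=\bigl(\begin{smallmatrix}0 & H\\ \Theta H\Theta^* & 0\end{smallmatrix}\bigr)$ coincides with the paper's first step, and your quadratic-form identity $\langle\tilde H\Phi,\Phi\rangle=2i\,\mathrm{Im}\langle H\phi,\Theta\phi\rangle$ on Majorana sections is a correct consequence of $H=H^*$, $\Theta^*=-\Theta$, $\Theta^2=-1$. But from that point you diverge from the paper, which gives ``almost skew-adjoint'' a different precise meaning: the authors invoke the physical decomposition of $H$ as a Dirac operator $D$ plus a quadratic correction, use the property $\Theta D\Theta^*=-D$ to conclude that the Dirac part alone would make $\tilde H$ exactly skew-adjoint ($\tilde H^*=-\tilde H$), and then declare $\tilde H$ a continuous perturbation of that skew-adjoint operator lying in the same homotopy class. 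That homotopy statement is precisely what the next theorem consumes (``the analytical index map is homotopy invariant, so $\mathrm{ind}_2(\tilde H)$ is well defined''), since the mod~2 index is defined on the classifying space $\hat{\mathcal F}$ of genuinely skew-adjoint Fredholm operators. Your reading --- that $\mathrm{Re}\langle\tilde H\Phi,\Phi\rangle$ vanishes on the $\mathcal J$-fixed real subspace --- is an intrinsic algebraic property that needs no Dirac decomposition, which is a genuine advantage; however, it does not by itself place $\tilde H$ in the homotopy class of a skew-adjoint Fredholm operator, and it needs one more caveat: since $\mathcal J\tilde H=-\tilde H\mathcal J$, the operator $\tilde H$ maps Majorana sections to anti-Majorana ones, so ``skew-adjoint after restriction to the Majorana sector'' should be phrased as skew-symmetry of the real bilinear form $(\Phi,\Psi)\mapsto\mathrm{Re}\langle\tilde H\Phi,\Psi\rangle$ on that real subspace rather than as a property of an endomorphism of it. In short: your argument is sound and arguably cleaner as algebra, but the paper's deformation-to-a-Dirac-operator argument is the one that actually feeds the index-theoretic application.
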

\begin{proof}
In general, the  single-particle Hamiltonian $H$
is defined by trigonometric functions, but near a fixed point    $H$ is   the  sum of a Dirac operator and a quadratic correction term.
Let us look at the adjoint of $\tilde{H}$,
\begin{equation*}
   \tilde{H}^* = \begin{pmatrix}
        0 & H^* \\
        \Theta H^* \Theta^*  & 0
      \end{pmatrix}
      =  \begin{pmatrix}
        0 & H \\
        \Theta H \Theta^*  & 0
      \end{pmatrix}
\end{equation*}
In the above, if $H$ is approximated by a Dirac operator $D$, which has the natural property $\Theta D \Theta^* = - D$,
then $\tilde{H}$  becomes a skew-adjoint  operator, i.e., $\tilde{H}^*  = - \tilde{H}$.
After adding a quadratic correction term to  $H$, the effective Hamiltonian
$\tilde{H}$ still belongs to the same homotopy class as that of $\begin{pmatrix}
                                                                  0 & -D \\
                                                                  D & 0
                                                                 \end{pmatrix}$,
since $H$ can be viewed as a continuous deformation of the Dirac operator $D$.
In other words, the local form of $\tilde{H}$ near a fixed point is a continuous deformation of a skew-adjoint operator by a small quadratic term.
The index problem of $\tilde{H}$ is determined by localizations around the fixed points.
Indeed, the physical property of a topological insulator is determined  by the localized form of $\tilde{H}$ around the fixed points, i.e., Majorana zero modes.
\end{proof}

\begin{rmk}
 An alternative way to construct the effective Hamiltonian is a  bottom-up approach. Since an edge state describes an electronic state, it can be modeled by a Dirac operator.
 When restricted to a fixed point, it is still a Dirac operator. In fact, a Majorana zero mode lives in a small neighborhood of a fixed point, so we have to extend the Dirac operator
 in one extra dimension and call it $D$, and then get the mirror image $-D$ under time reversal symmetry. This strategy of constructing a new Dirac operator is commonly used in the literature,
 for details the reader is referred to \cite{FSFF12, L88}.
 Once we have the skew-adjoint operator  $\begin{pmatrix}
                                                0 & -D \\
                                                D & 0
                                          \end{pmatrix}$ around a fixed point, it can be extended trivially 
 between two fixed points and finally obtain the effective Hamiltonian over a circle (more generally on a torus or sphere).  
 
\end{rmk}

\begin{thm}
   The topological $\mathbb{Z}_2$ invariant $\nu$ is the mod 2 analytical index of the effective Hamiltonian $\tilde{H}$,
\begin{equation}
 \nu = ind_a(\tilde{H})
\end{equation}
\end{thm}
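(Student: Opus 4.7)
The plan is to show $\nu = \mathrm{ind}_2(\tilde H)$ by identifying $\ker \tilde H$ with the subspace spanned by Majorana zero modes, then invoking the Atiyah--Singer mod 2 index formalism for skew-adjoint Fredholm operators, and finally closing the loop via the preceding spectral-flow theorem.

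First, I would analyze $\ker\tilde H$ directly. From the matrix form of $\tilde H$ and the eigenvalue computation in \eqref{EigEq}, a Majorana state $(\phi,\Theta\phi)$ satisfies $\tilde H(\phi,\Theta\phi)=(\tau^*E\cdot\Theta\phi,\, E\cdot\phi)$, which vanishes precisely when both $E(x)=0$ and $E(\tau(x))=0$. By the assumption that Kramers' degeneracy is the only source of energy degeneration (see the decomposition \eqref{decompeq}), the only fibers at which a time reversal invariant Hamiltonian can have zero modes whose Kramers partners simultaneously vanish are the fixed fibers over $x\in X^\tau$. Thus the kernel consists exactly of the Majorana zero modes $(\phi_0,\Theta\phi_0)$ with $\phi_0(x)=\Theta\phi_0(x)=0$ for $x\in X^\tau$.

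Next, I would establish that $\dim\ker\tilde H \bmod 2$ is the correct K-theoretic invariant. By the preceding lemma, $\tilde H$ is an almost skew-adjoint Fredholm operator: replacing $H$ by its Dirac part $D$ alone (where $\Theta D\Theta^* = -D$) yields $\tilde H^* = -\tilde H$ exactly, and the quadratic correction is a continuous Fredholm deformation that preserves the homotopy class of the mod 2 index. Together with the real structure $\mathcal{J}$ satisfying $\mathcal{J}^2=1$ and $\mathcal{J}\tilde H = -\tilde H\mathcal{J}$, the quadruple $(C^\infty(X), L^2(X,\mathcal{H}), \tilde H, \mathcal{J})$ is a $KQ_1$-cycle, and $\mathrm{ind}_2(\tilde H) := \dim\ker\tilde H \bmod 2$ is the Atiyah--Singer mod 2 analytical index taking values in $KO^{-1}(pt)\cong\mathbb{Z}_2$.

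Finally, I would close the loop by invoking the preceding theorem: $\nu$ equals the mod 2 spectral flow of $H$. Each spectral-flow increment $\pm 1$ at a fixed point $x\in X^\tau$ records a chiral zero mode of $H(x)$ which, paired with its Kramers partner, produces an element of $\ker\tilde H$; since time reversal symmetry does not distinguish $\phi$ from $\Theta\phi$, the sign ambiguity in the spectral flow collapses modulo 2 and matches the parity of $\dim\ker\tilde H$. The main obstacle is the second step: rigorously showing that the continuous deformation from the purely skew-adjoint Dirac model to the full perturbed $\tilde H$ stays inside the class of Fredholm homotopies that preserve the mod 2 index, and verifying that the Hilbert-space dimension of $\ker\tilde H$ recovers the fiberwise parity of Majorana zero modes exactly, without an extraneous factor of two coming from the Kramers doubling.
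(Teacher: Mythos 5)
Your proposal follows essentially the same route as the paper's proof: identify $\ker\tilde H$ with the Majorana zero modes via the eigenvalue equation at fixed points, and use the almost-skew-adjointness of $\tilde H$ together with homotopy invariance of the index map to make $\dim\ker\tilde H \bmod 2$ well defined. The one small divergence is that the paper places the analytical index in $KO^{-2}(pt)$ rather than $KO^{-1}(pt)$, because $L^2(X,\mathcal{H})$ carries a quaternionic rather than merely real structure --- but both groups are $\mathbb{Z}_2$, so this does not affect the argument.
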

\begin{proof}
  Since $\tilde{H}$ is a continuous deformation of a  skew-adjoint operator and the analytical index map is homotopy invariant, the mod 2 index of $\tilde{H}$ is well defined,
  $$
 ind_a(\tilde{H}) = \dim \ker \tilde{H}  \quad \text{mod 2}
  $$
  Suppose $\Psi = (\psi, \Theta \psi) \in \ker \tilde{H}$, i.e.,
  $$
   \begin{pmatrix}
      0 & \Theta H \Theta^* \\
      H & 0
   \end{pmatrix}
   \begin{pmatrix}
     \psi \\
     \Theta \psi
   \end{pmatrix} =
   \begin{pmatrix}
      0 \\
      0
   \end{pmatrix}
  $$
  then $H(x)\psi(x) = 0$ and $H(\tau(x)) \Theta \psi(x) = 0$. It only happens when $\Psi$ is a Majorana zero mode around a fixed point $x \in X^\tau$.
  Thus the parity of Majorana zero modes, i.e., the topological $\mathbb{Z}_2$ invariant, can be interpreted as the mod 2 index $ind_a(\tilde{H})$.  

\end{proof}

\begin{rmk}
Notice that we actually take the quaternionic dimension in the above analytical index since a zero mode of $H$ is a complex state,
$$
ind_a(\tilde{H}) = \dim_\mathbb{H} \ker \tilde{H}  \quad \text{mod 2}
$$
Because of the Kramers degeneracy, a Majorana zero mode consists of two complex chiral zero modes of $H$, that is,
$\dim_\mathbb{H} \ker \tilde{H} = \dim_\mathbb{C} \ker H$. Furthermore, zero modes of $H$ can only be found around a fixed point, where $H$ is effectively approximated by a Dirac operator $D$, so
$\dim_\mathbb{C} \ker H = \dim_\mathbb{C} \ker D$. Putting it together, we have
\begin{equation}
  ind_a(\tilde{H}) = dim_\mathbb{C} \ker D \quad \text{mod 2}
\end{equation}

\end{rmk}

The classifying spaces of KR-groups are constructed based on  subspaces of (skew-adjoint) Fredholm operators in \cite{AS69}, and they
 are also linked to different $\mathbb{Z}_2$ symmetries of general topological insulators in \cite{GS15}.
In the next paragraph, we identify the above analytical index as an element in $KO^{-2}(pt)$.

Let $\mathcal{F} (\mathscr{H}, J)$ denote the space of Fredholm operators  on a Real Hilbert space
$(\mathscr{H}, J)$, where $\mathscr{H}$ is a complex Hilbert space and $J$ is a real structure such that $J^2 =  1$. In addition, let
$\hat{\mathcal{F}} (\mathscr{H}, J)$ denote the subspace of skew-adjoint Fredholm operators,
 so $\hat{\mathcal{F}} (\mathscr{H}, J)$ is a classifying space of $KR^{-1}$,
 i.e., $KR^{-1}(X) = [X, \hat{\mathcal{F}} (\mathscr{H}, J)]$ for a compact Real space $(X, \tau)$.
In our case, the effective Hamiltonian $\tilde{H}$ is a continuous deformation of a skew-adjoint Fredholm operator acting on the
Real  Hilbert space $(L^2(X, {\mathcal{H}}), \mathcal{J})$,
so the analytical index belongs to $KO^{-2}(pt)$,
\begin{equation}
   ind_a:  \hat{\mathcal{F}} (L^2(X, {\mathcal{H}}), \mathcal{J}) \rightarrow  KO^{-2}(pt) = \mathbb{Z}_2
\end{equation}
Note that if  $\mathscr{H}$ is a complex Hilbert space, then the analytical index  lives in $KO^{-1}(pt)$.
However, in our case  $L^2(X, {\mathcal{H}})$ is viewed as a  Hilbert
space over quaternions $\mathbb{H}$, since for a Kramers pair $(\phi, \Theta \phi)$ and any point $x \in X$,  $(\phi(x), \Theta \phi(x)) \in \mathbb{H} \cong \mathbb{C} \oplus \Theta \mathbb{C}$, so $ind_a(\tilde{H}) \in KO^{-2}(pt)$.

In the following, we look at the mod 2 spectral flow related to the mod 2 analytical index.
Some relevant details on the spectral flow are given below,  for example see \cite{P96} for a general discussion of the spectral flow of self-adjoint operators.
Around a fixed point, a chiral state $\phi$ and its mirror partner $\Theta \phi$ may intersect with each other.
In a localized Majorana zero mode $\Phi_0 = (\phi_0, \Theta \phi_0)$,  a chiral zero mode $\phi_0$ or  $\Theta \phi_0$
will change the sign of its eigenvalue after going across the zero energy level.
Fix a chiral zero mode $\phi_0$ or $\Theta \phi_0$ in a Majorana zero mode,
if its sign  changes
from negative to positive when passing through a fixed point, then the spectral flow of the chosen chiral zero mode will increase by $1$.
On the other hand, if its sign changes in the negative direction,
 the spectral flow of the chiral zero mode will decrease by 1. So around a fixed point, the spectral flow of a chiral zero mode could change by $-1, 0$ or $1$.

 However, there is no a priori way  to tell $\phi$ and $\Theta \phi$ apart  since they are mirror partners of each other under the time reversal symmetry.
Time reversal symmetry does not determine whether a chiral state $\phi$ or $\Theta \phi$ is left-moving or right-moving, but only reverses the chirality,
that is, interchanges between  left-moving and right-moving.
As a consequence, if we pick  a chiral zero mode $\phi_0$ or $\Theta \phi_0$ around a fixed point,  the increase or decrease of
 the spectral flow of the chosen chiral zero mode at that fixed point by $1$ should  be equivalent, i.e., $+1 \equiv -1$ mod 2.
 So the existence of a Majorana zero mode can be counted by either adding or subtracting $1$ to the spectral flow of a chiral zero mode. In the end,
 we only count the parity of Majorana zero modes, so adding or subtracting $1$ to the mod 2 spectral flow are equivalent.

With the above physical picture in mind, let us consider the  spectral flow of the single-particle Hamiltonian $H(x)$ for $x \in X$, which is a family of self-adjoint Fredholm operators.
Heuristically, the spectral flow of a one parameter family of self-adjoint Fredholm operators is just the net number of eigenvalues (counting multiplicities) which pass
through zero in the positive direction from the start of the path to its end. More precisely, let $B: [0,1] \rightarrow \mathscr{F}^{sa}_*$ be a continuous path of self-adjoint Fredholm operators, if one defines
eigenprojections by functional calculus $E_a(t) : = \chi_{[-a, a]}(B_t)$, then the spectral flow of $\{B_t\}$, denoted by $sf(B)$, can be defined  as the dimension of the nonnegative eigenspace
at the end of this path minus the dimension of the nonnegative eigenspace at the beginning, for details see \cite{P96}. The space of self-adjoint Fredholm operators $\mathscr{F}^{sa}$ has three components, the 1st component
$\mathscr{F}^{sa}_+$ has essential spectrum $\{+1\}$, the 2nd component $\mathscr{F}^{sa}_-$ has essential spectrum $\{-1\}$, and the 3rd component is the complement
$\mathscr{F}^{sa}_* = \mathscr{F}^{sa} \setminus (\mathscr{F}^{sa}_+ \sqcup \mathscr{F}^{sa}_-)$ \cite{AS69}.


\begin{thm}
   The analytical  index $ind_a(\tilde{H})$ can be computed by the spectral flow of the single-particle Hamiltonian $H$ modulo 2.
\end{thm}
\begin{proof}
  Around a fixed point, if we can find a Majorana zero mode, then we use the spectral flow of a chiral zero mode to count that Majorana zero mode.
  Running through all the fixed points, we count all Majorana zero modes by adding those spectral flows together,
  which is an integer between $-k_0$ and $k_0$ where $k_0$ is the number of fixed points. In the end, the parity of Majorana zero modes is
  the collective result of those spectral flows of chiral zero modes modulo $2$.

  The Hamiltonian $H$ is a self-adjoint Fredholm operator parametrized by $x \in X$. Each eigenstate with zero energy gives a chiral zero mode at a fixed point.
  For a Majorana zero mode around a fixed point $x_k$, one takes a path  $H: I_k= [0, 1] \rightarrow \mathscr{F}^{sa}_*$ such that the fixed point $x_k$ is located at $t= \frac{1}{2}$,
  so the spectral flow of a chiral zero mode is the same as the spectral flow of $\{ H_t \}$ for $t \in I_k$.
  Note that the spectral flow of  $\{ H_t \}$ is independent of the choice of intervals $I_k$ as long as such interval passes through $x_k$. One can find sketches of such spectral flows for example in \cite{DS16}.
  For each fixed point with a Majorana zero mode, one considers a path $\{ H_t \}$ for $t \in I_k$ and its local spectral flow, where $k$ labels the fixed points with Majorana zero modes.
  So the spectral flow of the Hamiltonian $H(x)$ for $x \in X$ is the sum of all local spectral flows of $\{ H(t), t \in I_k \}$ for all $k$,
  which  counts all the sign changes of those local chiral zero modes running over all the fixed points. Therefore,
  the spectral flow of the Hamiltonian $H(x)$ modulo 2 computes the parity of Majorana zero modes, i.e., the mod 2 analytical index $ind_a(\tilde{H})$.

 \end{proof}


\begin{rmk}
   In the symplectic setting, if one models a chiral state by some Lagrangian submanifold, then the intersection number between this Lagrangian and the zero energy level is given by the Maslov index,
   which is used to define an edge $\mathbb{Z}_2$ index in \cite{ASV13}.
   The Maslov index can be geometrically realized by the spectral flow of a family of Dirac operators \cite{KLW15}, so that the edge $\mathbb{Z}_2$ index can be computed by a mod 2 spectral flow. Our approach
   has essentially the same idea using spectral flow, but it is not necessary to set up the model in symplectic topology.
\end{rmk}

\begin{rmk}
   Due to the construction of the effective Hamiltonian $\tilde{H}$, 
   the spectral flow of $\tilde{H}$ (counting  Majorana zero modes) is reduced to the spectral flow of the self-adjoint Fredholm operator $H$ (counting chiral zero modes). 
   For a path of skew-adjoint Fredholm operators, a mod 2 spectral flow is constructed in \cite{CPS16}.
\end{rmk}

\subsection{KQ-cycle} \label{sec:KQcyc}

Similar to a KR-cycle modeling a Majorana spinor,  we define a KQ-cycle to model a Majorana quasi-particle based on the effective Hamiltonian $\tilde{H}$ acting on a Kramers pair. 
There exists a natural grading operator
$\gamma = \begin{pmatrix}
          1 & 0 \\
          0 & -1
         \end{pmatrix}$
with $\gamma = \gamma^*$ and $\gamma^2 = 1$,  so $\gamma$ can be used to separate the chiral states in a Kramers pair $\Phi = (\phi, \Theta \phi)$,
$$
 \frac{1 + \gamma}{2} \Phi = \begin{pmatrix}
                               1 & 0 \\
                               0 & 0
                             \end{pmatrix} \begin{pmatrix}
                                             \phi \\
                                             \Theta \phi
                                            \end{pmatrix} = \begin{pmatrix}
                                                               \phi \\
                                                                0
                                                             \end{pmatrix}, \quad
 \frac{1 - \gamma}{2} \Phi = \begin{pmatrix}
                               0 & 0 \\
                               0 & 1
                             \end{pmatrix} \begin{pmatrix}
                                             \phi \\
                                             \Theta \phi
                                            \end{pmatrix} = \begin{pmatrix}
                                                               0 \\
                                                                \Theta \phi
                                                             \end{pmatrix}
$$
The Hilbert space $L^2(X,  {\mathcal{H}})$ is naturally $\mathbb{Z}_2$-graded
and can be decomposed into two chiral components $L^2(X, \mathcal{L}) \oplus  L^2(X, \Theta \mathcal{L})$ provided $\mathcal{H} \cong \mathcal{L} \oplus \Theta \mathcal{L}$.

\begin{defn}\label{KQcycle}
 The KQ-cycle of a Kramers pair is defined as the quintuple
\begin{equation}\label{KQ6cyc}
   (C^\infty(X), L^2(X, \mathcal{H})  , \tilde{H}, \mathcal{J}, \gamma )
\end{equation}
where
$$
\tilde{H} = \begin{pmatrix}
               0 & \Theta H \Theta^* \\
                H& 0
               \end{pmatrix}, \quad 
                                   \mathcal{J} = \begin{pmatrix}
                                                0 & \Theta^* \\
                                              \Theta & 0
                                                  \end{pmatrix} 
$$
\end{defn}

   Similar to the representation of spinors when $2k = 2$ mod 8, 
   the $KQ$-cycle models two inequivalent complex spinors $(\phi, \Theta \phi) \in L^2(X, \mathcal{L}) \oplus  L^2(X, \Theta \mathcal{L})$,
   and a Kramers pair can be viewed as a quaternionic pinor $\Phi \in L^2(X,  {\mathcal{H}})$. 
      The localization of the $KQ$-cycle (i.e., a localized Majorana zero mode) will be viewed as a generalized $KR_{10}$-cycle  with KO-dimension $j = 2$ mod 8.

 \begin{rmk}As a convention in condensed matter physics, the notation $H(\mathbf{k})$ for $\mathbf{k} \in K$     is commonly used for a Hamiltonian, where
 $K$ is called the $k$-space, i.e., the momentum space, for examples of $H(\mathbf{k})$ see \cite{S13}. In a topological insulator, near a fixed point $H(\mathbf{k}) = \mathbf{k} \cdot \vec{\sigma} + O(\mathbf{k}^2) $
 for the $\mathbf{k}$-vector $\mathbf{k} = (k_1, k_2)$  and Pauli matrices $\vec{\sigma} = (\sigma_1, \sigma_2)$,
 so  the single-particle Hamiltonian ${H}$ can be approximated by a Dirac operator $D$ by localization.
 For the spectral triple and noncommutative calculus in the momentum representation, we follow the canonical conventions on the noncommutative Brillouin torus, for example see \cite{PLB13}.
 Hence the above $KQ$-cycle is different from a classical $KR$-cycle in that (1) it is constructed to model quasi-particles (e.g. a Kramers pair) instead of real particles (e.g. Majorana spinors); 
 (2) it  goes beyond the scope of Dirac geometry, where only Dirac operators (or spinors) are involved;
 (3) the single-particle Hamiltonian $H$ is represented over the momentum space,  by localization, $H$ can be
 approximated  by a Dirac operator $D$ around a fixed point;
 (4) the effective Hamiltonian $\tilde{H}$ is not a self-adjoint operator,    but can be  approximated by a skew-adjoint operator around a fixed point. 
 \end{rmk}

\begin{prop}
   The localization of the  KQ-cycle  $(C^\infty(X), L^2(X, \mathcal{H})  , \tilde{H}, \mathcal{J}, \gamma )$   describes the geometry of a localized Majorana zero mode, 
   which is a coupled product of two $KR_5$-cycles. 
\end{prop}

\begin{proof} 
Around a fixed point, the single-particle Hamiltonian $H$ is approximated by a self-adjoint Dirac operator $D$, so the effective Hamiltonian $\tilde{H}$  can be approximated by a skew-adjoint operator $ \tilde{H} \sim \begin{pmatrix}
                                                                   0 & -D \\
                                                                  D & 0
                                                                 \end{pmatrix}$.  
Define a KR-cycle by  the quadruple, called the associated KR-cycle,   
\begin{equation}
  (C^\infty(X), L^2(X, \mathcal{L}) \oplus  L^2(X, \Theta \mathcal{L}), D, \Theta)
\end{equation}
which models a chiral state in a localized Majorana zero mode.
Notice that the time reversal operator $\Theta: L^2(X, \mathcal{L}) \rightarrow L^2(X, \Theta \mathcal{L})$ changes the chirality  of a chiral state $\phi \in L^2(X, \mathcal{L})$.
With the  real structure $\Theta$,  this  is a $KR_5$-cycle satisfying
\begin{equation} \label{KR5cycle}
   \Theta D = - D \Theta, \quad \Theta^2 = -1
\end{equation}
If one uses the other chiral state $\Theta \phi \in  L^2(X, \Theta \mathcal{L})  $ to define a KR-cycle, one obtains a second $KR_5$-cycle
\begin{equation}
  (C^\infty(X),    L^2(X, \Theta \mathcal{L}) \oplus L^2(X, \mathcal{L}), -D, -\Theta)
\end{equation}
As a generalization of the approximating operator $\begin{pmatrix} 
                                            0 & -D \\
                                             D & 0
                                           \end{pmatrix}$, 
 at the level of KR-cycles, 
 the localization of the KQ-cycle is a coupled product  of two $KR_5$-cycles,
 which is viewed as a generalized $KR_{10}$-cycle with KO-dimension 2 (= 10 mod 8).

\end{proof}

\section{Topological index}\label{sec:Tind}
The Atiyah--Singer index theorem teaches us that the
analytical index can be computed by the topological index.
In this section, we will compute the mod 2 analytical index
by a mod 2 topological index.
The key observation is that the parity anomaly of the topological $\mathbb{Z}_2$ invariant can be translated into a gauge anomaly, 
and the local formula is basically given by the odd topological index of a specific gauge representing time reversal symmetry.
In this paper, we only give examples in 2d and 3d, which are the cases of interest for condensed matter physics.

\subsection{Topological index map} \label{sec: Tindmap}
Given a skew-adjoint elliptic operator $P$ with the  symbol class $[\sigma(P)] \in KR^{-2}(T^*X)$,
  the topological index of $[\sigma(P)]$ was constructed by Atiyah  \cite{A76},
\begin{equation}
   ind_t : KR^{-2}(T^*X) \rightarrow KO^{-2}(pt) = \mathbb{Z}_2
\end{equation}
where $\pi: T^*X \rightarrow X$ is the cotangent bundle over $X$.
For a $d$-dimensional involutive space $(X, \tau)$, the Thom isomorphism in $KR$-theory is given by
\begin{equation}
   KR^{-j}(X) \cong KR^{d-j}(T^*X)
\end{equation}
Combining these two maps gives a map from KR-theory (or KQ-theory) to $KO^{-2}(pt)$, and we still call it the topological index map.

\begin{examp}
   When $X = \mathbb{T}^2$, the topological index map is a map from
   $KQ(\mathbb{T}^2)$ to $\mathbb{Z}_2$ since the Thom isomorphism identifies $ KQ(\mathbb{T}^2) $ with $KR^{-2}(T^*\mathbb{T}^2)$,
   $$
    ind_t: KQ(\mathbb{T}^2) = KR^{-4}(\mathbb{T}^2)  \cong KR^{-2}(T^*\mathbb{T}^2) \rightarrow KO^{-2}(pt) 
   $$

\end{examp}

\begin{examp}
\label{indexex}

  When $X = \mathbb{T}^3$, the topological index map is a map from $KQ^{-1}(\mathbb{T}^3) $ to $\mathbb{Z}_2$,
   $$
    ind_t: KQ^{-1}(\mathbb{T}^3) = KR^{-5}(\mathbb{T}^3)  \cong KR^{-2}(T^*\mathbb{T}^3)  \rightarrow KO^{-2}(pt) 
   $$

\end{examp}

 From the KQ-groups of $\mathbb{T}^3$, $\widetilde{KQ}(\mathbb{T}^3) (=  3KO^{-2}(pt) \oplus KO^{-1}(pt))$
   and $ \widetilde{KQ}^{-1}(\mathbb{T}^3) ( = 3KO^{-4}(pt)\oplus KO^{-2}(pt))$ both contain  $\mathbb{Z}_2$ components.
   But the topological index map identifies $KO^{-2}(pt)$ as the right place where the topological $\mathbb{Z}_2$ invariant really lives in.
   This fact matches perfectly with the results on the effective Hamiltonian $\tilde{H}$ and its analytical index. 
 
In general, the topological index can be computed based on the Chern character, which is a map from complex K-theory to
de-Rham cohomology by  Chern--Weil theory.
In the following subsections, we will see concrete realizations of the mod 2 topological index in 3d and 2d.

\subsection{3d case}\label{3dTopInd}
The topological $\mathbb{Z}_2$ invariant is a parity anomaly in quantum field theory, which is a global anomaly and hard to compute.
In three dimensions, based on the Chern--Simons theory, this parity anomaly is translated into
a gauge anomaly, which is a locally computable gauge problem. This idea has been explained in detail for example in \cite{KLW15}, which has the origin from the $SU(2)$ gauge anomaly by Witten \cite{W82}.

For the 3d momentum spaces $X = \mathbb{T}^3$ or $\mathbb{S}^3$, we assume the rank of the Hilbert bundle $\pi: \mathcal{H} \rightarrow X$ is 2, i.e.,
$rank(\mathcal{H}) = 2$. Recall that the transition function $w$ in  Eq.\eqref{TransFunc} defines a map
$w: X \rightarrow U(2)$, so the structure (or gauge) group is $U(2)$. If, in addition, the Majorana states $\Phi$ are assumed to be normalized such that $\langle \Phi, \Phi \rangle = 1$,
then the gauge group  $U(2)$ is reduced to $SU(2) \simeq Sp(1)$.

The (anti-)involutions of the Hilbert bundle $({\mathcal{H}}, \Theta) \rightarrow (X, \tau)$ induces
 an involution on the structure group. Define $ \sigma: U(2) \rightarrow U(2)$ by $\sigma(g) \mapsto -g^T$ so that $\sigma^2 = 1$.
The natural compatibility condition is given by   $w \circ \tau = \sigma \circ w$, that is,  $w$ defines an equivariant map,
$$
w: (X, \tau) \rightarrow (U(2), \sigma)
$$
Furthermore, $w$ gives rise to  the generator $[w]$ of  $\mathbb{Z}_2 \in \widetilde{KQ}^{-1}(X)$ for $X = \mathbb{S}^3$ or $\mathbb{T}^3$, 
which has an intimate relation to spectral flow \cite{G93, L88}.
We study the geometry and topology of time reversal symmetry, which is represented by the transition function $w$. 
The argument about spectral flow guarantees the K-theoretic class $[w]$ falls into $KQ^{-1}(X)$ (a shift by $-1$ from $KQ(X)$), 
so the topological $\mathbb{Z}_2$ invariant has an interpretation as an odd topological index of $w$.   
In physical terms, we consider the gauge theory of a topological insulator, and $w$ is the gauge transformation (induced by the time reversal symmetry)
characterizing the band structure.

The odd Chern character of a differentiable map $g: X \rightarrow U(n)$ is defined by
\begin{equation*}
   Ch(g) := \sum_{k = 0}^{(\dim X - 1)/2} Ch_{2k+1}(g) = \sum_{k=0}^{(\dim X - 1)/2} (-1)^k \frac{k!}{(2k+1)!} tr[(g^{-1}dg)^{2k+1}]
\end{equation*}
which is a closed form of odd degree  \cite{G93}.
The topological index in odd dimensions can be computed by the odd index theorem \cite{BD82}, when the Dirac or A-roof genus
$\hat {A}(X) = 1 $, 
\begin{equation}\label{OddIndex}
   ind_t(g)  =      \frac{1}{4\pi^2 }   \int_X Ch_3(g) = -\frac{1}{24\pi^2 }   \int_X tr(g^{-1}dg)^{3}
\end{equation} 
for the three-dimensional case.
This formula is sometimes called the winding number (or degree) of $g$, which  can be used to compute the spectral flow $ sf(D, g^{-1}Dg)$ \cite{G93, KLW15}.

\begin{prop} \label{prop2}
  The odd topological index of the transition function $w$, i.e., $ind_t(w)$, is naturally $\mathbb{Z}_2$-valued.
\end{prop}
 \begin{proof} 
 Up to a normalization constant, the odd index (or the winding number) of $w$  is basically given by
 $$
  \int_X tr(w^{-1}dw)^{3} 
 $$
 Now we apply the time reversal transformation $\tau$ to $X$, and change the local coordinates from $x$ to $\tau(x)$.
 $$
 \int_X tr [w^{-1}(\tau(x))dw(\tau(x)) ]^{3}
 $$
 Using the compatibility condition between  $w$  and  $\tau$, i.e.,
  $$
  w \circ \tau = \sigma \circ w = \sigma(w) =  -w^{T} = - \bar{w}^{-1}
  $$ the above equals
    \begin{equation*}
        \int_X tr(\bar{w}d \bar{w}^{-1})^{3}   =  \overline{ \int_X tr(wdw^{-1})^{3} } = \int_X tr(wdw^{-1})^{3} 
    \end{equation*}
     which gives  the  winding number of $w^{-1}$. 
     It is well-known that $w$ and $w^{-1}$ have opposite winding numbers, i.e.,
     $$
     ind_t(w^{-1} ) = - ind_t(w)
     $$
     As a global invariant, the winding number does not depend on the choice of local coordinates, so for the involutive space $(X, \tau)$, 
  the topological index $ind_t(w)$ should be identified with $-ind_t(w)$ from the above computation. 
  
    In other words, the odd topological index of $w$  is a mod 2 degree,
    $$
    ind_t(w) = deg_2(w)
    $$
    since $ind_t(w)$ must be $\mathbb{Z}_2$-valued due to the time reversal symmetry.
 \end{proof}

 \begin{rmk}

    In string theory, $ind_t(g)$ is also called the  Wess--Zumino--Witten (WZW) topological term.
    In the seminal work \cite{W82}, Witten pointed out that as a $SU(2)$ anomaly   $ind_t(g)$ is actually $\mathbb{Z}_2$-valued and related to the mod 2 index.
    So the above result is already known in the physics literature, and we just gave a new proof in the presence of the time reversal symmetry.
   
 \end{rmk}

The odd index theorem gives a local formula to compute the topological $\mathbb{Z}_2$ invariant in three dimensions.
Similar to the Atiyah--Singer index theorem, we obtain the mod 2 index theorem for 3d topological insulators, i.e.,
the analytical index equals the topological index.

\begin{thm}
  The topological $\mathbb{Z}_2$ invariant for 3d topological insulators can be understood as a mod 2 index theorem,
\end{thm}
\begin{equation}
  ind_a(\tilde{H}) = ind_t(w)
\end{equation}
\begin{proof}
   First of all, the analytical index of $\tilde{H}$ counts the parity of Majorana zero modes, which can be computed by the mod 2 spectral flow of the self-adjoint Fredholm operator $H$.
   On the other hand, the odd index formula $ind_t(w)$ computes the spectral flow $sf(D, w^{-1}Dw)$ \cite{G93}, 
   where $D$ is the Dirac operator used to approximate  $H$ by localization, 
   and $ind_t(w)$  is naturally $\mathbb{Z}_2$-valued. So we can prove the mod 2 analytical index and the mod 2 topological index are the same
   based on different interpretations of the same mod 2 spectral flow.
\end{proof}

The WZW topological term is an action functional  of the relevant gauge theory.
Kane and Mele considered the effective fermionic field theory of a topological insulator,
and derived a Pfaffian formalism of the topological $\mathbb{Z}_2$ invariant, which is  called the Kane--Mele invariant.
We now show the equivalence between these two formalisms of the topological $\mathbb{Z}_2$ invariant. 

\begin{defn}(\cite{KM05}) Assume the set of fixed points $X^\tau$ is finite,
the Kane--Mele invariant of a topological insulator is defined by
   \begin{equation} \label{KMinv}
   \nu  = \prod_{x \in X^\tau} \frac{pf [w(x)]}{\sqrt{\det [w(x)]}}
  \end{equation}
\end{defn}
Recall that at any fixed point $x\in X^\tau$,  the transition function $w$ from Eq.\eqref{TransFunc} 
is a skew-symmetric matrix, i.e., $ w^T(x) =- w ({x})$, where $T$ is the transpose of a matrix. So it makes sense to
take the Pfaffian of $w$ at any fixed point $x \in X^\tau$, denoted by $pf[w(x)]$.
The relation between the Pfaffian and determinant function is $pf^2(M) = \det(M)$ for a skew-symmetric matrix $M$.
Compared with the square root of the determinant of $w$ at $x \in X^\tau$, i.e., $\sqrt{\det[w(x)]}$,
the Kane--Mele invariant is defined as the product of the signs of Pfaffians over the fixed points.

Let us start with the identity $\ln \det (A) = tr \ln (A)$ for a square matrix $A$,
when $A$ is skew-symmetric, it becomes $2 \ln pf (A) =  tr \ln (A)$.
The transition function  $w$ is  a matrix parametrized by a 3d  momentum space  $w: (X, \tau) \rightarrow (U(2), \sigma)$,
and we consider the variation of $\ln \det (w) $
with respect to the local coordinates of $X$.

\begin{lemma}
   The integral form of the variation  of $\ln \det (w) $ equals  the odd topological index of $w$ up to a constant,
\begin{equation} \label{DetEqWn}
   \int_X d \ln \det (w) = -\frac{1}{4} \int_X tr(w^{-1}dw)^3
\end{equation}
\end{lemma}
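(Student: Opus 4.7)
The natural entry point is the classical Jacobi formula for the differential of the determinant, which gives the pointwise identity
\[
d\ln\det(w) \;=\; \mathrm{tr}(w^{-1}dw),
\]
so, writing $A := w^{-1}dw$ for the pulled-back Maurer--Cartan 1-form, the task becomes comparing the 1-form $\mathrm{tr}(A)$ with the cubic form $\mathrm{tr}(A^3)$ on the 3-manifold $X$. The second structural tool is the Maurer--Cartan equation $dA + A\wedge A = 0$, which both implies $\mathrm{tr}(A^3)$ is a closed 3-form on $X$ and relates its de Rham class to polynomial invariants of $A$ of lower degree. I would first establish these two identities and check the closedness.

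The bridge between the two sides I plan to construct goes through an auxiliary 4-chain $B$ with $\partial B = X$, extending $w$ to $\widetilde w\colon B \to U(2)$ and applying Stokes' theorem. Using the Maurer--Cartan equation one obtains $d\,\mathrm{tr}(A\wedge A) = -2\,\mathrm{tr}(A^3)$ (up to sign conventions), allowing the cubic integral on the right to be recast as a boundary expression. The decomposition $U(2) \simeq U(1)\cdot SU(2)$ organizes this: writing $w = u\cdot s$ with $u \in U(1)$ and $s \in SU(2)$ gives $A = u^{-1}du\,I + s^{-1}ds$, from which $\mathrm{tr}(A) = 2\,u^{-1}du$ (capturing the $U(1)$ part, since $\mathrm{tr}(s^{-1}ds)=0$) while $\mathrm{tr}(A^3) = \mathrm{tr}((s^{-1}ds)^3)$ (capturing the $SU(2)$ part). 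The lemma thus equates two a priori distinct pieces of topological data attached to $w$.

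The main obstacle is precisely this dimensional and structural mismatch: a 1-form on the left versus a 3-form on the right on the same 3-manifold. Overcoming it requires the time-reversal compatibility $\tau^*w = -w^{-1}$ proved earlier in Section~2, which forces the $U(1)$ and $SU(2)$ winding data of $w$ to co-vary, so that both integrals compute the same $\mathbb{Z}_2$-valued topological invariant when $X$ is a time-reversal-symmetric momentum space. The normalization factor $-1/(4\pi^2)$ is fixed by the standard volume of $SU(2)\simeq\mathbb{S}^3$ in the Cartan 3-form convention, and matches the Wess--Zumino--Witten term advertised in the discussion preceding the lemma.
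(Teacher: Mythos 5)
Your plan diverges from the paper's proof, which is a direct (if heuristic) computation: the paper expands the third mixed partial derivatives of $\ln\det w = \mathrm{tr}\ln w$ in local coordinates, antisymmetrizes over the orderings of $x,y,z$, and uses the compatibility condition $\tau^* w = -w^{-1}$ to convert the leftover second-derivative cross terms into exact forms, arriving at $-4\int_X d\ln\det w = \int_X \mathrm{tr}(w^{-1}dw)^3$ before restoring the $\pi^2$ normalization. Your route is more structural, but as written it has a genuine gap at its center. You correctly identify the dimensional mismatch --- Jacobi's formula makes $d\ln\det w = \mathrm{tr}(w^{-1}dw)$ a $1$-form, whereas the right-hand side is a $3$-form --- but you never actually resolve it. The sentence claiming that $\tau^* w = -w^{-1}$ ``forces the $U(1)$ and $SU(2)$ winding data of $w$ to co-vary, so that both integrals compute the same $\mathbb{Z}_2$-valued topological invariant'' is precisely the assertion of the lemma; no mechanism is given that converts the constraint on $w$ into a quantitative relation between the period of $\mathrm{tr}(A)$ (or the jumps of $\ln\det w$ across $X^\tau$, which is how the paper interprets the left-hand side) and the degree-type integral $\int_X \mathrm{tr}(A^3)$. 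Without that step the argument is circular.

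There are also two technical problems in the proposed bridge. First, for a matrix-valued $1$-form $A$ one has $\mathrm{tr}(A\wedge A) = \mathrm{tr}(A_iA_j)\,dx^i\wedge dx^j = 0$ identically, because $\mathrm{tr}(A_iA_j)$ is symmetric in $i,j$ while $dx^i\wedge dx^j$ is antisymmetric; hence the claimed identity $d\,\mathrm{tr}(A\wedge A) = -2\,\mathrm{tr}(A^3)$ cannot hold (the correct transgression uses the Chern--Simons form $\mathrm{tr}(A\,dA + \tfrac{2}{3}A^3)$, which for the flat Maurer--Cartan form reduces to $-\tfrac{1}{3}\mathrm{tr}(A^3)$ and is closed, not exact off a $4$-chain). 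Second, the expression $\int_X \mathrm{tr}(A)$ for a $1$-form over the $3$-manifold $X$ is not defined, so even the left-hand side of the identity you intend to prove is not pinned down in your framework. To salvage the approach you would need to (i) say precisely what $\int_X d\ln\det w$ means (the paper treats it as an antisymmetrized third-derivative $3$-form, equivalently as the accumulated jumps of the multivalued function $\ln\det w$ at the fixed points), and (ii) carry out an actual computation --- for instance the paper's expansion of $\partial^3_{xyz}\,\mathrm{tr}\ln w$ together with the substitution $\tau^*w = -w^{-1}$ --- that produces the factor relating it to $\mathrm{tr}(w^{-1}dw)^3$.
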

\begin{proof}
  Suppose an invertible and differentiable matrix $A(x, y, z)$ depends on  variables $x, y, z$, we consider the third partial derivative of $\ln \det (A) $ with respect to $x, y$ and $z$.
  The first partial derivative is well-known,
  $$
  \frac{\partial \ln \det A}{ \partial x} =\frac{\partial \, tr \ln A}{ \partial x} = tr (A^{-1} \frac{\partial A}{\partial x})
  $$
  The second partial derivative is
  $$
  \begin{array}{ll}
   \frac{\partial^2  \ln \det A} { \partial x \partial y}
            & = tr (A^{-1} \frac{\partial ^2 A}{\partial x \partial y}) + tr(\frac{\partial A^{-1}}{\partial y} \frac{\partial A}{\partial x} )\\
           & = tr (A^{-1} \frac{\partial ^2 A}{\partial x \partial y}) - tr(A^{-1} \frac{\partial A}{\partial y} A^{-1}  \frac{\partial A}{\partial x} )
  \end{array}
  $$
  since $\frac{\partial A^{-1}} {\partial y}  A + A^{-1} \frac{\partial A} {\partial y} = 0$ based on $A^{-1} A = I$, i.e., $\frac{\partial A^{-1}} {\partial y}  = -  A^{-1} \frac{\partial A} {\partial y}  A^{-1}$.
   The third partial derivative is
  $$
  \begin{array}{lll}
   \frac{\partial^3  \ln \det A} { \partial x \partial y \partial z}
           & = & tr (A^{-1} \frac{\partial ^3 A}{\partial x \partial y \partial z}) + tr (\frac{ \partial A^{-1}}{\partial z} \frac{\partial ^2 A}{\partial x \partial y})  \\
           & & - tr(\frac{ \partial A^{-1}}{\partial z} \frac{\partial A}{\partial y} A^{-1}  \frac{\partial A}{\partial x} )  - tr(A^{-1} \frac{\partial^2 A}{\partial y \partial z} A^{-1}  \frac{\partial A}{\partial x} ) \\
           & & - tr(A^{-1} \frac{\partial A}{\partial y} \frac{ \partial A^{-1}}{\partial z}  \frac{\partial A}{\partial x} ) - tr(A^{-1} \frac{\partial A}{\partial y} A^{-1}  \frac{\partial^2 A}{\partial x \partial z} ) \\
           & = & tr(\frac{ \partial \ln A}{\partial x} \frac{\partial \ln A}{\partial y}   \frac{\partial \ln A}{\partial z} ) + tr(\frac{ \partial \ln A}{\partial x} \frac{\partial \ln A}{\partial z}   \frac{\partial \ln A}{\partial y} ) + tr (A^{-1} \frac{\partial ^3 A}{\partial x \partial y \partial z}) \\
           & &  +  tr ( \frac{ \partial A^{-1}}{\partial x} \frac{\partial ^2 A}{\partial y \partial z} ) + tr ( \frac{\partial ^2 A}{\partial x \partial z} \frac{ \partial A^{-1}}{\partial y}) + tr ( \frac{\partial ^2 A}{\partial x \partial y} \frac{ \partial A^{-1}}{\partial z})
  \end{array}
  $$
  so the differential of $\ln \det A$ is
   $$
   \begin{array}{lll}
   d \ln \det A &= & [tr(\frac{ \partial \ln A}{\partial x} \frac{\partial \ln A}{\partial y}   \frac{\partial \ln A}{\partial z} ) + tr(\frac{ \partial \ln A}{\partial x} \frac{\partial \ln A}{\partial z}   \frac{\partial \ln A}{\partial y} )
                 + tr (A^{-1} \frac{\partial ^3 A}{\partial x \partial y \partial z}) \\
                &  & +  tr ( \frac{ \partial A^{-1}}{\partial x} \frac{\partial ^2 A}{\partial y \partial z} ) + tr ( \frac{\partial ^2 A}{\partial x \partial z} \frac{ \partial A^{-1}}{\partial y}) + tr ( \frac{\partial ^2 A}{\partial x \partial y} \frac{ \partial A^{-1}}{\partial z})] dxdydz \\
                &= & [tr(\frac{ \partial \ln A}{\partial x} \frac{\partial \ln A}{\partial y}   \frac{\partial \ln A}{\partial z} )  + tr (A^{-1} \frac{\partial ^3 A}{\partial x \partial y \partial z}) \\
                &  & +  tr ( \frac{ \partial A^{-1}}{\partial x} \frac{\partial ^2 A}{\partial y \partial z} ) +                tr ( \frac{\partial ^2 A}{\partial x \partial y} \frac{ \partial A^{-1}}{\partial z})] dxdydz \\
                &  & - [tr(\frac{ \partial \ln A}{\partial x} \frac{\partial \ln A}{\partial z}   \frac{\partial \ln A}{\partial y} )  + tr ( \frac{\partial ^2 A}{\partial x \partial z} \frac{ \partial A^{-1}}{\partial y})] dxdz dy
  \end{array}
   $$

   If we change the order of the variables $x, y$ and $z$, we similarly get
   $$
     \begin{array}{lll}
   d \ln \det A  = & [tr(\frac{ \partial \ln A}{\partial y} \frac{\partial \ln A}{\partial x}   \frac{\partial \ln A}{\partial z} )
                   + tr (A^{-1} \frac{\partial ^3 A}{\partial y \partial x \partial z}) \\
                   & +  tr ( \frac{ \partial A^{-1}}{\partial y} \frac{\partial ^2 A}{\partial x \partial z} )
                     +    tr ( \frac{\partial ^2 A}{\partial y \partial x} \frac{ \partial A^{-1}}{\partial z})] dydxdz \\
                   &- [tr(\frac{ \partial \ln A}{\partial y} \frac{\partial \ln A}{\partial z}   \frac{\partial \ln A}{\partial x} )  + tr ( \frac{\partial ^2 A}{\partial y \partial z} \frac{ \partial A^{-1}}{\partial x})] dydz dx
  \end{array}
   $$
   and
     $$
   \begin{array}{lll}
   d \ln \det A  = & [tr(\frac{ \partial \ln A}{\partial z} \frac{\partial \ln A}{\partial y}   \frac{\partial \ln A}{\partial x} )
                     + tr (A^{-1} \frac{\partial ^3 A}{\partial z \partial y \partial x}) \\
                  & +  tr ( \frac{ \partial A^{-1}}{\partial z} \frac{\partial ^2 A}{\partial y \partial x} )
                    +  tr ( \frac{\partial ^2 A}{\partial z \partial y} \frac{ \partial A^{-1}}{\partial x})] dzdydx \\
                  & - [tr(\frac{ \partial \ln A}{\partial z} \frac{\partial \ln A}{\partial x}   \frac{\partial \ln A}{\partial y} )
                  + tr ( \frac{\partial ^2 A}{\partial z \partial x} \frac{ \partial A^{-1}}{\partial y})] dzdx dy
  \end{array}
   $$
   Combining them together,  we have
   $$
    \begin{array}{ll}
    - d \ln \det A = &tr(A^{-1}dA)^3 + 3 [tr (A^{-1} \frac{\partial ^3 A}{\partial x \partial y \partial z}) +    tr ( \frac{ \partial A^{-1}}{\partial x} \frac{\partial ^2 A}{\partial y \partial z} ) \\
                  & + tr ( \frac{\partial ^2 A}{\partial x \partial z} \frac{ \partial A^{-1}}{\partial y}) + tr ( \frac{\partial ^2 A}{\partial x \partial y} \frac{ \partial A^{-1}}{\partial z})] dxdydz \\
    \end{array}
   $$
   where $tr(A^{-1}dA)^3 = \sum_{(i,j,k) \in S^3} \epsilon_{ijk} tr(d_i\ln A \, d_j \ln A \,  d_k \ln A) dx_i dx_j dx_k$.

   Now we replace $A$ by $w$ and take the integral,
    $$
    \begin{array}{ll}
    - \int_X d \ln \det w = & \int_X tr(w^{-1}dw)^3 + 3 \int_X [tr (w^{-1} \frac{\partial ^3 w}{\partial x \partial y \partial z} )+    tr ( \frac{ \partial w^{-1}}{\partial x} \frac{\partial ^2 w}{\partial y \partial z} ) \\
                  & + tr ( \frac{\partial ^2 w}{\partial x \partial z} \frac{ \partial w^{-1}}{\partial y}) + tr ( \frac{\partial ^2 w}{\partial x \partial y} \frac{ \partial w^{-1}}{\partial z})] dxdydz \\
    \end{array}
   $$
   Apply the compatibility condition $\tau^*w = - w^{T} = - \bar{w}^{-1}$ in the last integral, 
   $$
   \int_X tr [ \frac{\partial ^2 \tau^*w}{\partial x \partial y} \frac{ \partial \tau^*(w^{-1})}{\partial z}] dxdydz =  \int_X tr ( \frac{\partial ^2 w^{-1}}{\partial x \partial y} \frac{ \partial w}{\partial z}) dxdydz
   $$
   the complex conjugation is canceled since only the fixed points contribute to the integral and its value is a real number. 
   After plugging back into the above, we get
   $$
    \begin{array}{lll}
    - \int_X d \ln \det w & = & \int_X tr(w^{-1}dw)^3 + 3 \int_X [tr (w^{-1} \frac{\partial ^3 w}{\partial x \partial y \partial z} )+    tr ( \frac{ \partial w^{-1}}{\partial x} \frac{\partial ^2 w}{\partial y \partial z} ) \\
                  & & + tr ( \frac{\partial ^2 w}{\partial x \partial z} \frac{ \partial w^{-1}}{\partial y}) + tr ( \frac{\partial ^2 w^{-1}}{\partial x \partial y} \frac{ \partial w}{\partial z})] dxdydz \\
                   & =  &  \int_X tr(w^{-1}dw)^3 +  3 \int_X \frac{\partial }{\partial x} [tr(w^{-1} \frac{\partial ^2 w}{ \partial y \partial z} ) + tr ( \frac{\partial  w^{-1}}{ \partial y} \frac{ \partial w}{\partial z})] dxdydz \\
                   & = &  \int_X tr(w^{-1}dw)^3 +  3 \int_X \frac{\partial^2 }{\partial x \partial y}   tr (   w^{-1} \frac{ \partial w}{\partial z}) dxdydz \\
                   & = &  \int_X tr(w^{-1}dw)^3 +  3 \int_X\, d \,tr \ln w \\
  \end{array}
   $$
   Using  $ \ln \det w = tr \ln w $ again, we  obtain
   $$
    -4 \int_X d \ln \det w = \int_X tr(w^{-1}dw)^3
   $$

\end{proof}

  It is better to adjust the normalization constant by hand on the right hand side in \eqref{DetEqWn} to make it $\mathbb{Z}$-valued, which is a common practice in physics,
   \begin{equation}\label{DetEqInd}
      \int_X d \ln \det w = - \frac{1}{24 \pi^2}\int_X tr(w^{-1}dw)^3 = ind_t(w)
   \end{equation}
Let us look into this formula, and explain why  the Chern--Simons invariant, i.e., the topological index $ind_t(w)$, and the Kane--Mele invariant are equivalent.
On the left hand side, the determinant function can be interpreted as a section of the determinant line bundle, which is closely related to index theory.
The integral of the variation of $\ln \det w$ turns out to be the jumps at the fixed points since they are the only isolated singularities.
In other words,  the result of the left integral is the alternating difference between the evaluations on the fixed points,
\begin{equation*}
   \int_X d \ln \det w =  \Delta_{x \in X^\tau} \ln \det [w(x)]  = \Delta_{x \in X^\tau} 2 \ln  pf [w(x)]
\end{equation*}
where the determinant function is changed by the Pfaffian function since $w$ is skew-symmetric at the fixed points.
Since the odd topological index of $w$ is $\mathbb{Z}_2$-valued, the alternating difference can be replaced by a summation,
\begin{equation*}
   \int_X d \ln \det w  = \sum_{x \in X^\tau} 2 \ln pf [w(x)]
\end{equation*}

On the right hand side in \eqref{DetEqInd},   the topological index $ind_t(w)$  computes
the topological $\mathbb{Z}_2$ invariant. The mod 2 topological index takes the value $0$ or $1$, so it
is an element in the additive group $\mathbb{Z}_2$, i.e., $ind_t(w) \in (\mathbb{Z}_2, +)$.

The identity \eqref{DetEqInd}  gives rise to the  relation
\begin{equation} \label{AddInv}
   \sum_{x \in X^\tau}  \ln pf [w(x)] = \frac{1}{2}ind_t(w)   
\end{equation}
If we exponentiate both sides, we obtain
\begin{equation}
   \prod_{x \in X^\tau} pf [w(x)]  = exp \{ { {2 \pi i} \frac{ind_t(w)}{2} } \}  = (-1)^{ind_t(w)}
\end{equation}
where we put in a factor $2 \pi i$  on the right hand side since it is the result of an effective field theory.
In other words, the product of Pfaffians over the fixed points gives an exponentiated version of the topological $\mathbb{Z}_2$ invariant, denoted by $\nu$,
\begin{equation} \label{MulInv}
  \nu = \prod_{x \in X^\tau} pf [w(x)]  = (-1)^{ind_t(w)}
\end{equation}
As a consequence, the exponentiated topological $\mathbb{Z}_2$ invariant  is an element in the multiplicative group $\mathbb{Z}_2$, i.e., $\nu \in (\mathbb{Z}_2, \times)$.

By properties of the transition function $w$, the Pfaffian function $pf[w(x)]$  takes the value $1$ or $-1$ at any fixed point $x \in X^\tau$, 
since it is possible to normalize the determinant $\det[w(x)]$ to be 1.
So $\nu$ does not change if we replace the Pfaffians by their signs, i.e.,
\begin{equation*}
  \nu = \prod_{x \in X^\tau} sgn (pf [w(x)])
\end{equation*}
In general,  the square root of the determinant function  is added
as a reference term  to determine the sign of a Pfaffian, and the Kane--Mele invariant was originally defined as
\begin{equation*}
   \nu = \prod_{x \in X^\tau} \frac{pf [w(x)]}{\sqrt{\det [w(x)]}}
\end{equation*}

\begin{thm}
   The topological $\mathbb{Z}_2$ invariant for 3d topological insulators can be computed by the topological index,
   i.e., $ind_t(w)$. The Kane--Mele invariant $\nu$ is the exponentiated topological $\mathbb{Z}_2$ invariant,
   i.e., $\nu = (-1)^{ind_t(w)}$. They are equivalent since the additive group $(\mathbb{Z}_2, +)$ is isomorphic
   to the multiplicative group $(\mathbb{Z}_2, \times)$ by the exponential map.
\end{thm}
 The Chern--Simons invariant  and the Kane--Mele invariant provide two different ways to compute the topological $\mathbb{Z}_2$ invariant in 3d,
 the former is an action functional  of a specific gauge transformation
   and the latter is obtained from an effective quantum field theory. The equivalence of the Chern--Simons invariant and the Kane--Mele invariant was also proved
   in \cite{FM13, WQZ10} from different perspectives.

   From a different point of view, the equivalence relation \eqref{MulInv} between the topological index $ind_t(w)$ and the Kane--Mele invariant $\nu$ is viewed as the original bulk-boundary correspondence
   on the level of $\mathbb{Z}_2$ invariant. Our novel observation is that $\nu$ is defined over the fixed points, which can be identified as the effective boundary.
   In the next section, we will discuss about a  bulk-boundary correspondence on the level of K-theory.

\subsection{2d case}

For the 2d momentum spaces $X = \mathbb{S}^2$ or $\mathbb{T}^2$,  one has $\widetilde{KQ}(X)\cong \mathbb{Z}_2$,
which can be represented by  the Quaternionic Hilbert bundle $(\mathcal{H}, \Theta) \rightarrow (X, \tau)$.
Without loss of generality, we assume the Hilbert bundle $\mathcal{H}$ is of rank 2.

\begin{examp}
    If we define the 2d sphere in real coordinates,
$$
\mathbb{S}^2 = \{ (x, y, z)\in \mathbb{R}^3 \,\, | \,\, x^2 + y^2 + z^2 = 1 \}
$$
then the Hopf bundle over $\mathbb{S}^2$ can be represented by the projection $p \in M_2(C(\mathbb{S}^2))$ or the unitary $u = 2 p -1$,
$$
p = \frac{1}{2} \begin{pmatrix}
                   1 + z & x + iy \\
                   x- iy & 1-z
                \end{pmatrix}, \quad
u =   \begin{pmatrix}
                    z & x + iy \\
                   x- iy & -z
                \end{pmatrix}
$$
The first Chern character $Ch_1(p)$ gives the standard volume form on $\mathbb{S}^2$,
$$
Ch_1(p) = tr(pdpdp) = \frac{-i}{2} ( xdydz - ydxdz + zdxdy)
$$
The complex K-theory $K(\mathbb{S}^2) \cong \mathbb{Z}$ is basically generated by the Hopf bundle.

The  time reversal transformation on $\mathbb{S}^2$ is $\tau: (x, y, z) \mapsto (x, -y, -z)$,
and the first Chern character does not change under $\tau$, i.e., $Ch_1(p) = Ch_1(\tau^*(p))$,
where $\tau^*p$ is the pullback  of the Hopf bundle.
The transition function $w$ in this case can be defined by
$$
w : \mathbb{S}^2 \rightarrow SU(2); \quad (x, y, z) \mapsto \begin{pmatrix}
                                                              y+iz & x \\
                                                              -x & y-iz
                                                            \end{pmatrix}
$$
The fixed points of $\tau$ is $(\pm 1, 0,0)$, and $w(\pm 1, 0,0) = \begin{pmatrix}
                                                              0 & \pm 1 \\
                                                              \mp 1 & 0
                                                            \end{pmatrix}$.

\end{examp}

\begin{prop}
  The first Chern class of the Hilbert bundle $\pi: \mathcal{H} \rightarrow X$  is a 2 torsion,
  i.e.,
  \begin{equation}
   2  c_1(\mathcal{H}) = 0, \quad  c_1(\mathcal{H}) \in H^2(X, \mathbb{Z}),
  \end{equation}
\end{prop}
\begin{proof}
 Consider the pullback bundle $\tau^*\mathcal{H}$ over $X$,
 $$
   \xymatrix{
\tau^*\mathcal{H} \ar[d] \ar[r]^{\tau^*}  & \mathcal{H}  \ar[d] \\
X \ar[r]^{\tau} &  X }
   $$
 Since the time reversal transformation $\tau$ is orientation-reversing,
 the pullback bundle is isomorphic to  the conjugate bundle, i.e., $\tau^*\mathcal{H} \cong \overline{\mathcal{H}}$.
 On the other hand,  the time reversal operator $\Theta$ is an anti-unitary operator, so $\Theta$ induces a bundle isomorphism   $\mathcal{H} \cong  \overline{\mathcal{H}}$.
 In sum, 
 $$
  \mathcal{H}  \cong  \tau^*\mathcal{H} \cong \overline{\mathcal{H}}
 $$
 Therefore, the first Chern class of the Hilbert bundle $\mathcal{H}$ is a 2 torsion,
 $$
 c_1(\mathcal{H} ) = c_1( \overline{\mathcal{H}}  ) = - c_1(\mathcal{H} ), \quad i.e., \quad 2 c_1(\mathcal{H} ) = 0
 $$

\end{proof}

\begin{rmk}
If a Kramers pair is assumed to be a pair of global sections of the Hilbert bundle $\mathcal{H}$, then
$\mathcal{H}$ can be
decomposed into the sum of a trivializable  line bundle and the pullback of its conjugate,
$$
\mathcal{H} =   \mathcal{L} \oplus \tau^* \overline{\mathcal{L }}
$$
This is analogous to the decomposition of  a spin bundle.
In this case, the first Chern class of $\mathcal{H}$ is zero,
$$
c_1(\mathcal{H}) = c_1({\mathcal{L}}) + c_1(\tau^* \overline{\mathcal{L}}) =  c_1({\mathcal{L}}) +  c_1( {\mathcal{L}}) = 0
$$
since ${\mathcal{L}}$ has  trivial first Chern class. 
\end{rmk}

The Chern character $Ch: K(X) \otimes \mathbb{R} \cong H^{ev}_{dR}(X, \mathbb{R})$ establishes an isomorphism between K-theory and cohomology theory without torsion,
which maps the class of a vector bundle to the even part of de Rham cohomology with real coefficients.
The first Chern number $c_1$  is the integral
\begin{equation*}
  c_1  =  \frac{1}{2\pi}\int_X c_1(\mathcal{H}) = \frac{1}{2\pi}\int_X Ch_1(p) 
\end{equation*}
where $p$ is the projection representing the Hilbert bundle $\pi: \mathcal{H} \rightarrow X$.
It is well-known that the first Chern number  of topological insulators is zero. 
Hence the integral of the first Chern character modulo 2, i.e., $c_1$ mod $2$,  
\emph{cannot} give a local topological index formula for the   $\mathbb{Z}_2$ invariant in 2d.

We have to point out that it is an open problem to find a local formula for the topological $\mathbb{Z}_2$ invariant  in 2d.
 In \cite{Kell16}, the author gave a approach  by looping and lifting the problem from 2d to 3d.
Inspired by the Kane--Mele invariant, we propose a topological index formula below by reducing the dimension from 2d to 1d.

For simplicity, we consider the Brillouin zone (or torus) $\mathbb{T}^2$ as the momentum space, so the effective Brillouin zone $EBZ = \mathbb{S}^{1,1} \times I$  
and its boundary $\partial EBZ = \mathbb{S}^{1,1}_N \sqcup \mathbb{S}^{1,1}_S$.
Based on the weak $\mathbb{Z}_2$-structure of $(\mathbb{T}^2, \tau)$, $\mathbb{T}^2$ is decomposed into two copies of  $\mathbb{S}^{1,1} \times I$ and the Hilbert bundle $\pi: \mathcal{H} \rightarrow \mathbb{T}^2$
can be reconstructed from the restricted bundles $\mathcal{H}_{\mathbb{S}^{1,1} \times I}$ by the clutching map $w: \mathbb{S}^{1,1}_N \sqcup \mathbb{S}^{1,1}_S \rightarrow U(2)$.
We propose the topological index formula   on $\mathbb{T}^2$ as
\begin{equation}\label{2dTopInd}
   \frac{1}{4\pi^2}   \int_{\mathbb{S}^{1,1}_N \sqcup \mathbb{S}^{1,1}_S} tr(w^{-1}dw) =  [\frac{1}{2\pi}   \int_{\mathbb{S}^{1,1}_N} tr(w^{-1}dw) ] [\frac{1}{2\pi}   \int_{\mathbb{S}^{1,1}_S} tr(w^{-1}dw) ]
\end{equation}
 which is denoted by $ind_t(w)_{2 \rightarrow 1} $. 
Both integrals on the right hand side are $\mathbb{Z}_2$-valued as in the 3d case, so as a product $ind_t(w)_{2 \rightarrow 1} $  is also $\mathbb{Z}_2$-valued.

\begin{rmk}
As a gauge problem, the  information about the time reversal symmetry is encoded in the global gauge  transformation $w$.
   In odd dimensions, the topological $\mathbb{Z}_2$ invariant can be computed directly by the odd topological index  (i.e.,  WZW topological term), which is naturally $\mathbb{Z}_2$-valued due to the time reversal symmetry.
   In even dimensions, since the mod 2 reduction of the ordinary topological index cannot give a local formula for the topological $\mathbb{Z}_2$ invariant, we have to do a ``dimensional reduction'' and define
   it as the  WZW topological term over the boundary of the effective Brillouin zone ($\partial EBZ$), which is a variation of the odd topological index. 
   A similar  result for  the WZW topological term in the geometric setting of bundle gerbes can be found in \cite{G17}.

\end{rmk}

In this 2d case, we can also prove the mod 2 index theorem, i.e., the mod 2 analytical index equals the mod 2 topological index, if we collect the contributions from the 1d boundary of the effective Brillouin zone.
This is also closely related to our discussions about long exact sequences in K-theory, see  Example \ref{keyex} in  \S \ref{seqsec}.

\subsection{Index pairing} 

The  odd topological index is the integral of the odd Chern character $Ch(g)$ over the momentum space $X$, 
which can be viewed as a pairing between a K-theoretic class $[g] \in K^{-1}(X)$ and the fundamental class $[X]$,   
\begin{equation}\label{OddPair}
   ind_t(g) = \frac{1}{4\pi^2} \int_X Ch(g) = \langle [X], [g] \rangle 
\end{equation}
In a modern language, an index can be obtained by the index pairing between K-homology and K-theory.
Index pairing is very important for generalizations of index theory into noncommutative geometry. 
In this subsection, we reformulate the mod 2 index theorem as an index pairing
between KR-homology and KR-theory.

From a K-cycle $(\mathcal{A}, \mathscr{H} , D)$, one obtains the corresponding Fredholm module  $(\mathcal{A}, \mathscr{H} , F)$ by setting $F = D (1+D^2)^{-1/2}$.
By definition, the set of equivalence classes of Fredholm modules modulo unitary equivalence and homotopy equivalence defines the K-homology group, for details see \cite{HR00}.
In fact, the Dirac operator $D$ gives rise to a fundamental class $[D]$ in K-homology. 
From \S \ref{sec:KQcyc}, a Kramers pair is modeled by the KQ-cycle, and a localized Majorana zero mode is modeled by a  $KR_5$-cycle
$ (C^\infty(X), L^2(X, \mathcal{H}), D, \Theta)$,
which gives the fundamental class $[D]$ in the KR-homology $KR_5(X)$. 

On the other hand, the transition function $w: (X, \tau) \rightarrow (U(2), \sigma)$ completely determines the Quaternionic Hilbert bundle $\pi: (\mathcal{H}, \Theta) \rightarrow (X, \tau)$ 
(or the topological band theory in the presence of time reversal symmetry).
By the argument about spectral flow, the degree of $KQ$-theory is shifted by $-1$, and $w$ induces the K-theoretic class $[w]$ in $KQ^{-1} (X) = KR^{-5}(X)$. 

\begin{prop}
    The mod 2 index theorem of the topological $\mathbb{Z}_2$ invariant can be reformulated as an index pairing between KR-homology and KR-theory,
   \begin{equation}
      KR_5(X) \times KR^{-5}(X)  \rightarrow  KO^{-2}(pt); \quad    ([D], [w])\mapsto  \langle  [D] ,  [w] \rangle 
   \end{equation}
\begin{proof}
     From \eqref{OddPair}, the odd topological index $ ind_t(w)$ is a pairing,
   $$
   ind_t(w)= \langle  [X] ,  [w] \rangle 
   $$
   which can be used to compute the spectral flow $sf(d, w^{-1}dw)$ \cite{G93}. 
   After quantization, the trivial connection $d$ is replaced by the Dirac operator $D$, 
   $$
   sf(D, w^{-1}Dw) = ind_t(w)= \langle  [D] ,  [w] \rangle 
   $$
   which is the crucial part of the Atiyah--Singer index theorem connecting analysis and geometry. 
   The analytical index   $ind_a(\tilde{H})$ can be computed by the spectral flow  $ sf(D, w^{-1}Dw)$ modulo 2, 
   so the index pairing between $[D]$ and $[w]$ is a fancy way to express the mod 2 index theorem
    $$
   ind_a(\tilde{H}) = ind_t(w)
   $$ 
   Taking time reversal symmetry into account, $[D]$ is the class of the $KR_5$-cycle, so it represents the fundamental class in $ KR_5(X)$.
   On the other hand,  $[w]$ falls into $KR^{-5}(X)$ and represents the class of the Quaternionic Hilbert bundle.  
   In sum, the index pairing $ \langle  [D] ,  [w] \rangle $ gives another way to compute the topological $\mathbb{Z}_2$ invariant. 
\end{proof}

\end{prop}

\section{Bulk-boundary correspondence}\label{sec:Bbcorr}

In this section, we will focus on the bulk-boundary correspondence,
which is a generalized index map in  KK-theory.
The bulk is the momentum space $X$ and the boundary will be identified as the fixed points of the time reversal symmetry.
So at the level of K-theory,  the bulk theory is modeled by the $KQ$-theory of $X$,
and the boundary theory is described by the  $KO$-theory of $X^\tau$.
Therefore the bulk-boundary correspondence is a map connecting $KR$ (or $KO$)-groups, which can be realized by a $KKR$ (or $KKO$)-cycle.

The bulk-boundary correspondence of topological insulators is an active research area in mathematical physics, one can find many recent works in this field,
for example \cite{ BKR16, GP13, K16, MT16} and many others. Here we take a slightly different approach and propose that the effective boundary of a topological insulator
is described by the fixed points in the momentum space.

\subsection{Effective boundary}
In physics, the bulk--boundary correspondence establishes the equivalence between the effective (quantum) field theories of the bulk and boundary. 
As different aspects of the topological $\mathbb{Z}_2$ invariant, the topological $\mathbb{Z}_2$ index  and the  Kane--Mele invariant are equivalent,
which is the geometric form of the equivalence between two field theories. 
The topological index  is an integral of the (odd) Chern character over the bulk, and the Kane--Mele invariant has a Pfaffian formalism derived from the effective fermionic field theory. 
In other words, the bulk theory is an action functional of a specific gauge field  (i.e., the WZW topological term), and the Kane--Mele invariant is an effective quantum field theory defined over the set of fixed points. 
Inspired by the Kane--Mele invariant,
we propose that the effective boundary is defined as the fixed points, which is different from the geometric boundary of the bulk. 
Hence the equivalence between the topological index and the Kane--Mele invariant is 
the genuine bulk-boundary correspondence on the level of $\mathbb{Z}_2$ invariant.

The phase portrait
of a topological insulator is essentially determined by its behavior on the boundary.
In mathematics, let us look at the mod 2 analytical index  again,
and identify the fixed points as the effective boundary. 
In the Hilbert bundle $\pi: \mathcal{H} \rightarrow X$, the fixed points  divide the momentum space $X$
into different coordinate patches. The chiral states $\phi$ and $\Theta \phi$ in a Kramers pair
could intersect with each other and change the chirality only at a fixed point.
For a localized Majorana zero mode $(\phi_0, \Theta \phi_0)$ around a fixed point $x \in X^\tau$,
the sign of the eigenvalue of $\phi_0$ or $\Theta \phi_0$ can only change when passing through the fixed point $x$,
and the spectral flow of a chiral zero mode, i.e., $\psi_0$ or $\Theta \psi_0$, counts  the existence of that Majorana zero mode.
Finally, the topological $\mathbb{Z}_2$ invariant is the parity of Majorana zero modes, which can be computed by
the mod 2 spectral flow. The argument about spectral flow suggests that the fixed points can be identified as the effective boundary, 
since charge transport via spectral flow only happens at the boundary.

At each fixed point $x \in X^\tau$,  the (mod 2) spectral flow of a chiral zero mode around $x$ gives rise to a local analytical index,
which takes the value in
$KO^{-2}(x) = \mathbb{Z}_2$.
In total, the topological $\mathbb{Z}_2$ invariant belongs to 
\begin{equation}
    \oplus_{x \in X^\tau}KO^{-2}(x)= KO^{-2}(X^\tau)
   \stackrel{\sum}{\rightarrow} KO^{-2}(pt) 
\end{equation}
where the summation map $\Sigma$ is basically the push-forward induced by the projection to an abstract point $p:X^{\tau}\to pt$. 
Physically, this summation represents the collective effect -- i.e., summing up  mod 2 -- in the end.
 It can also be viewed as performing a discrete form of integration, and then counting the parity.
This explains our identification of the fixed points $X^\tau$ as the boundary
through which information from the bulk can be exchanged. The effective boundary  $X^\tau$   gives rise to a family
index theorem, to get an index number, it is better to further project it to an abstract point as in the construction of the topological index map by Atiyah.

Furthermore, the effective boundary will be reduced to  the fixed point with ``top codimension'' inspired by the topological index map.
By the Kane--Mele invariant, the topological $\mathbb{Z}_2$ invariant is a  collective effect over all fixed points.
It is convenient to realize the above abstract point  as a specific fixed point $x_0$
 so that the topological $\mathbb{Z}_2$ invariant takes the value in $ KO^{-2}(x_0)$.
In the decomposition of $\mathbb{S}^{1,d}$ in \S\ref{quatsec} or  $\mathbb{T}^d$ in the next example,
this specific fixed point can be taken as the fixed point corresponding to the first summand, that is the point of ``top codimension''.
This fixed point $x_0 \in X^\tau$ with top codimension is naturally identified with the sphere with top dimension by the   KR-theory of spheres in Eq.\eqref{KRSd}.
For a $d$-dimensional momentum space $X$, the topological index map in \S \ref{sec: Tindmap} can be viewed as a map from the KQ-theory of $X$ to the KR-theory of $\mathbb{S}^{1,d}$,
\begin{equation}
ind_t: {KQ}^{2-d}(X) =  {KR}^{-d-2}(X) \mapsto KO^{-2}(x_0) \cong   \widetilde{KR}^{-d-2}(\mathbb{S}^{1,d})
\end{equation}

\begin{examp}
    The $KR$-theory   of $\mathbb{T}^d$ can be computed  based on the decomposition of the fixed points
$$
(\mathbb{T}^d)^\tau = \oplus _{k =0}^d \binom{d}{k} \{ pt \}
$$
where $\binom{d}{k}$ is the binomial coefficient,  so that
$$
KR^{-j}(\mathbb{T}^d) = \oplus_{n =0}^d \binom{d}{k} KO^{-j+k}(pt)
$$
The spheres ${S}^k$ correspond to the fixed points with different codimensions
so that the torus has the stable homotopy splitting \cite{FM13}
$$
\mathbb{T}^d \sim_s \vee_{k =0}^d \binom{d}{k} {S}^k
$$
As the strong topological $\mathbb{Z}_2$ invariant, it only depends on the fixed point $x_0$ with top codimension and falls into
\begin{equation*}
   KO^{-j+d}(x_0)  =  \widetilde{KR}^{-j} (\mathbb{S}^{1,d})
\end{equation*}
where $j$ is fixed by the condition $-j+d = -2$.
\end{examp}

\subsection{Correspondence}

Once the effective boundary is identified as the set of fixed points (or the fixed point with top codimension), we can talk about
the bulk-boundary correspondence on the level of K-theory.
The bulk-boundary correspondence  is supposed to be a KK-cycle (as a correspondence)  connecting
the bulk and boundary theory if both  theories are modeled by K-theory (or K-homology).

Let $X$ be the momentum space and $x_0 \in X^\tau$ be the fixed point with top codimension.
The bulk theory is described by the $KQ$-theory of $X$,
the boundary theory is modeled by  $ KO^{-2}(X^\tau)$ (or $ KO^{-2}(x_0)$) from the above discussions. 
When $X$ is the torus $\mathbb{T}^d$, one has the Baum--Connes isomorphism for $\Gamma = \mathbb{Z}^d$ in \S \ref{BCIso}. 
Let us  look into this example again, and then we will pose a generalization for the general case.

\begin{prop}
   The bulk-boundary correspondence on the level of K-theory for the torus $\mathbb{T}^d$ is a surjective map from $KQ^{2-d}(\mathbb{T}^d) = KR^{-d-2}(\mathbb{T}^d)$ to $KO^{-2}(x_0)$, 
   which can be realized by a composition map, 
   \begin{equation}
        i_0^* \circ PD \circ \alpha: {KR}^{-d-2} (\mathbb{T}^d) \cong {KO}_{d+2}(\mathbb{T}^d) \cong {KO}^{-2}(\mathbb{T}^d) \rightarrow KO^{-2} (x_0)
   \end{equation}
   where $\alpha$ is the Dirac isomorphism, PD is the   Poincar\'{e} duality between $KO$-homology and $KO$-theory, and $i_0^*$ is the localization. 
\end{prop} 

\begin{proof}
  
If $i_0: x_0 \hookrightarrow \mathbb{T}^d$ is the inclusion map, then it induces the restriction  map (or localization) in $KO$-theory,
$$i_0^*:  KO^{-j}(\mathbb{T}^d) \rightarrow KO^{-j} (x_0)$$ 
The Baum--Connes isomorphism or the dual Dirac isomorphism
\begin{equation*}
       KO_i(\mathbb{T}^d) = KO_i^{\mathbb{Z}^d} (\mathbb{R}^d) \simeq KO_i(C^*(\mathbb{Z}^d, \mathbb{R})) = KO_i(C(\mathbb{T}^d, \tau)) = KR^{-i} (\mathbb{T}^d, \tau) 
\end{equation*}
can be realized by an invertible real KK-class $\beta$ in $KKO(C_\mathbb{R}(\mathbb{T}^d), C(\mathbb{T}^d, \tau))$
connecting the  KO-homology of $C_\mathbb{R}(\mathbb{T}^d) $ and the KO-theory of $ C(\mathbb{T}^d, \tau)$, i.e.,
$$
KO_i(\mathbb{T}^d) \times KKO(C_\mathbb{R}(\mathbb{T}^d), C(\mathbb{T}^d, \tau)) \xrightarrow{\simeq} KO_i(C(\mathbb{T}^d, \tau))
$$ 
Let $\alpha$ be the inverse  of $\beta$,
that is, $\alpha \in KKO( C(\mathbb{T}^d, \tau), C_\mathbb{R}(\mathbb{T}^d)) $ such that $\alpha \circ \beta = id$,
so $\alpha$ realizes the Dirac isomorphism
$$
KR^{-i} (\mathbb{T}^d, \tau) \times KKO( C(\mathbb{T}^d, \tau), C_\mathbb{R}(\mathbb{T}^d)) \xrightarrow{\simeq} KO_i(\mathbb{T}^d) 
$$ 
Hence the map from the bulk K-theory to the boundary K-theory, denoted by $\eta :   KR^{-d-2}(\mathbb{T}^d)  \rightarrow  KO^{-2}(x_0)$, can be realized by $\eta = i_0^* \circ PD \circ \alpha$.
\end{proof}

   \begin{examp}
   When $d= 2$, the bulk-boundary correspondence of a 2d topological insulator is the isomorphism
   \begin{equation*}
      \eta:  \widetilde{KQ}(\mathbb{T}^2) = \widetilde{KR}^{-4} (\mathbb{T}^2)   \xrightarrow{\simeq} KO^{-2} (x_0)
   \end{equation*}
   \end{examp}

   \begin{examp}
   When $d= 3$, the bulk-boundary correspondence of a 3d topological insulator is the surjective map
   \begin{equation*}
      \eta:  {KQ}^{-1}(\mathbb{T}^3) =  {KR}^{-5} (\mathbb{T}^3)   \xrightarrow{} KO^{-2} (x_0)
   \end{equation*}
   \end{examp}

 The above bulk-boundary correspondence  $\eta: KQ^{2-d}(\mathbb{T}^d) \rightarrow KO^{-2}(x_0)$ is
 another way to realize the topological index map, where the KK-cycle $\alpha$ plays a key role.
 As a generalization, the bulk-boundary correspondence in $KR$-theory is a $KKR$-cycle in $KKR^{d}(X, X^\tau)$ for $d = \dim(X) $ realizing the topological index map in KK-theory,
 \begin{equation}
    KR^{-d-2}(X, \tau) \times KKR^{d}(X, X^\tau) \rightarrow KO^{-2}(X^\tau)   \stackrel{\sum}{\rightarrow} KO^{-2}(x_0)
 \end{equation}

On the other hand, the bulk theory can be modeled by $K$-cycles or $K$-homology,
such an idea has been carried out for example in \cite{BCR16}. So the bulk-boundary
correspondence in $KR$-homology  has the  form,
\begin{equation}
   KKR(X^\tau, X) \times KR_2(X) \rightarrow KO_{2}(X^\tau) \xrightarrow{\simeq}   KO^{-2}(X^\tau)  \stackrel{\sum}{\rightarrow} KO^{-2}(x_0)
\end{equation}
where the isomorphism is the Poincar\'{e} duality. This bulk-boundary correspondence on the level of K-homology is viewed as a generalized
analytical index map.

In sum, the bulk-boundary correspondence provides a new perspective to view the
mod 2 index theorem in KK-theory,
 \begin{equation}
   \xymatrixcolsep{5pc}\xymatrix{
  KKR(X^\tau, X) \times KR_2(X)   \ar[r]^-{ind_a}   & KO^{-2}(x_0)  \ar[d]^{=}    \\
      KQ^{2-d}(X) \times KKR^{d}(X, X^\tau) \ar[r]^-{ind_t}  &   KO^{-2}(x_0)   }
\end{equation}

\section*{Acknowledgements} The authors would like to thank Jonathan Rosenberg for exciting discussions on KR-theory.
RK  thankfully  acknowledges  support  from  the  Simons foundation under
collaboration grant \# 317149 and BK thankfully  acknowledges  support  from  the NSF
under the grants PHY-0969689 and PHY-1255409.

\nocite{*}
\bibliographystyle{plain}
\bibliography{IndKQ}

\end{document}